
\documentclass[letterpaper,twocolumn,10pt]{article}
\usepackage{style}

\usepackage{amsmath,amssymb,amstext} 
\usepackage[pdftex]{graphicx} 
\usepackage[utf8]{inputenc}
\usepackage{amsthm}
\usepackage{comment}
\usepackage{multirow}
\usepackage{pifont}
\usepackage{xifthen}
\usepackage[underline=false]{pgf-umlsd}
\usepackage{tikz}
\usepackage{svg}
\usepackage{algorithm}
\usepackage[noend]{algpseudocode}
\usepackage{algpseudocode}
\usepackage{enumitem}
\usepackage[lambda ,advantage, operators, sets , adversary, landau, probability, notions, logic, ff , mm, primitives, events, complexity, asymptotics, keys]{cryptocode}
\usepackage{booktabs} 
\usepackage{comment}
\usepackage{xfrac}
\usepackage[normalem]{ulem} 
\usepackage{graphicx}
\usepackage{makecell}
\usepackage{boldline}
\usepackage{stmaryrd}
\usepackage{relsize}
\usepackage{array}
\usepackage{stfloats}
\usepackage{xurl}
\usepackage[numbers,sort]{natbib}
\newcolumntype{?}{!{\vrule width 1pt}}
\usepackage{tikz}
\usepackage{pgfplotstable}
\usepackage{pgfplots}
\pgfplotsset{compat=1.17}
\usepackage{subfig}
\usepackage{amsthm,thm-restate}
\usepgfplotslibrary{fillbetween}
\microtypecontext{spacing=nonfrench}

\usepackage{cleveref}

\newtheorem{theorem}{Theorem}
\newtheorem{definition}{Definition}

\newcounter{margin} 
\setcounter{margin}{1}

\definecolor{Dgreen}{rgb}{0, 0.6, 0}







\newcommand{\sdend}{%
   \begin{pgfonlayer}{umlsd@background}%
      \ifnum\value{instnum}>0%
         \foreach \t [evaluate=\t] in {1,...,\theinstnum}{%
            \draw[dotted] (inst\t) -- ++(0,-\theseqlevel*\unitfactor-2.2*\unitfactor);%
         }%
         
      \fi%
      \ifnum\value{threadnum}>0%
         \foreach \t [evaluate=\t] in {1,...,\thethreadnum}{%
            \path (thread\t)+(0,-\theseqlevel*\unitfactor-0.1*\unitfactor) node (threadend) {};%
            \tikzstyle{threadstyle}+=[threadcolor\t]%
            \drawthread{thread\t}{threadend}%
         }%
      \fi%
   \end{pgfonlayer}%
}

\newcommand{\formatcomment}[1]{\normalsize\textcolor{blue!25!black}{\texttt{#1}}}

\algrenewcommand{\algorithmiccomment}[1]{\hfill\parbox{5.2cm}{\formatcomment{//\,#1}}}
\algrenewcommand{\algorithmicprocedure}{\textbf{algorithm}}
\algnewcommand{\algorithmicforeach}{\textbf{for each}}
\algdef{S}[FOR]{ForEach}[1]{\algorithmicforeach\ #1\ \algorithmicdo}

\algdef{SE}[FUNCTION]{Function}{EndFunction}%
   [2]{\algorithmicfunction\ \textproc{#1}\ifthenelse{\equal{#2}{}}{}{(#2)}}%
   {\algorithmicend\ \algorithmicfunction}%

\newcommand{\polydegree}[0]{N}

\newcommand{\db}[0]{\mathbb{DB}}
\newcommand{\dbsize}[0]{n}
\newcommand{\binaryset}[0]{{\{0,1\}}^{\ell}}

%
%

\begin{document}

\date{}

\title{\Large \bf Constant-weight PIR: Single-round Keyword PIR via Constant-weight Equality Operators }

\author{
{\rm Rasoul Akhavan Mahdavi}\\
University of Waterloo\\
rasoul.akhavan.mahdavi@uwaterloo.ca
\and
{\rm Florian Kerschbaum}\\
University of Waterloo\\
florian.kerschbaum@uwaterloo.ca
} 


\maketitle

\begin{abstract}

Equality operators are an essential building block in tasks over secure computation such as private information retrieval. In private information retrieval (PIR), a user queries a database such that the server does not learn which element is queried. In this work, we propose \emph{equality operators for constant-weight codewords}. A constant-weight code is a collection of codewords that share the same Hamming weight. Constant-weight equality operators have a multiplicative depth that depends only on the Hamming weight of the code, not the bit-length of the elements. In our experiments, we show how these equality operators are up to 10 times faster than existing equality operators. Furthermore, we propose PIR using the constant-weight equality operator or \emph{constant-weight PIR}, which is a PIR protocol using an approach previously deemed impractical. We show that for private retrieval of large, streaming data, constant-weight PIR has a smaller communication complexity and lower runtime compared to SEALPIR and MulPIR, respectively, which are two state-of-the-art solutions for PIR. Moreover, we show how constant-weight PIR can be extended to keyword PIR. In keyword PIR, the desired element is retrieved by a unique identifier pertaining to the sought item, e.g., the name of a file. Previous solutions to keyword PIR require one or multiple rounds of communication to reduce the problem to normal PIR. We show that constant-weight PIR is the first practical single-round solution to  single-server keyword PIR.

\end{abstract}

\section{Introduction}

Homomorphic encryption permits computation on encrypted data without the need to decrypt. For example, some homomorphic encryption schemes allow addition and multiplication~\cite{fan2012somewhat,10.1145/2090236.2090262, 10.1007/978-3-642-13190-5_2}, from which arbitrary functions are derived. However, operations with homomorphic encryption are not equally expensive and multiplications are up to 20 times more expensive compared to additions. Furthermore, the maximum number of sequential multiplications, i.e., the multiplicative depth, that can be performed is limited by the parameters of the encryption scheme. Hence, it is beneficial to derive functions using a smaller multiplicative depth and fewer multiplications.
Equality operators are an important function used in many applications~\cite{Akavia_Feldman_Shaul_2019, Akavia2019, kacsmar2020differentially, 10.1145/2517488.2517497,10.1145/3411495.3421361}.
However, the cost of performing one equality check using homomorphic encryption is often impractical due to the high multiplicative depth of the equality circuit. Specifically, the multiplicative depth of existing equality operators depends on the bit-length of the elements, which limits scalability.

In this work, we propose \emph{equality operators for constant-weight codewords} as a new, efficient way to compare homomorphically encrypted data. Constant-weight codewords are binary strings that share the same Hamming weight. We design equality operators specifically for these codewords with a multiplicative depth that depends only on the Hamming weight of the code, not the bit-length of elements. Our experiments show equality operators for constant-weight codewords are up to 10$\times$ faster than existing equality operators.

Private Information Retrieval (PIR), first introduced by Chor et al.~\cite{chor1995private}, is an example of an application in which equality operators play a crucial role. In a PIR protocol, a user retrieves an element from a database, such that the database does not learn which element is retrieved. Typically, elements are retrieved using the physical address of the desired item, which we call \emph{index PIR}. In another variant called keyword PIR~\cite{chor1997private}, the user's desired element is retrieved using an identifier pertaining to the sought item, e.g., the name of a file. State-of-the-art solutions for keyword PIR reduce it to index PIR using one or multiple extra rounds of communication~\cite{chor1997private}. Ali et al.~\cite{Ali2019CommunicationComputationTI} propose a probabilistic hashing technique to map identifiers from a large domain to a small table.

In this work, we propose PIR using constant-weight equality operators or \emph{constant-weight PIR}. Our protocol uses an approach that was assumed to be impractical, which differs from that of related work, specifically SEALPIR~\cite{sealpir} and MulPIR~\cite{Ali2019CommunicationComputationTI}, two efficient PIR protocols.
Constant-weight PIR also scales to databases with large payload data or streaming data with less communication and computation overhead compared to SEALPIR and MulPIR, respectively. For example, we show that for 16000 rows, the runtime of MulPIR grows twice as fast as constant-weight PIR, as a function of the payload size. Consequently, constant-weight PIR has a smaller runtime than MulPIR when the payload size exceeds 268 KB, which corresponds to a database size of 4.3 GB. Similarly, the communication complexity of SEALPIR grows twice as fast as constant-weight PIR as a function of the payload size.

Moreover, due to the modularity and simplicity, it can also be extended to keyword PIR with minor modification, no extra rounds, and minimal overhead. Constant-weight keyword PIR is the first efficient single-round solution for single-server keyword PIR.


Single-round single-server keyword PIR is useful for the application of private file retrieval~\cite{Mayberry2014EfficientPF}.
Compared to existing approaches, PIR has asymptotically optimal communication overhead for privately retrieving large items from a database~\cite{Mayberry2014EfficientPF}.
Constant-weight PIR in particular also has a small computational overhead, compared to other PIR protocols, when the retrieved item is large and also allows updates in the database with no interaction with the users.

We show through our experiments how the size of the domain of keywords only affects one of the three steps performed by the server in constant-weight PIR. Hence, the domain of keywords can be expanded with marginal cost.
We also show how the constant-weight code used in our protocol is a more space-efficient representation of a PIR query, for a fixed multiplicative depth, compared to existing work. Specifically, for a multiplicative of $d$ in the PIR protocol over a database with $n$ possible identifiers, the representation used in constant-weight PIR has a size of $O(\sqrt[2^d]{n})$. In contrast, SEALPIR and MulPIR use a representation for the query of size $O(d\sqrt[d+1]{n})$.

Overall, the contributions of this paper are as follows:
\begin{itemize}
\itemsep0mm
    \item Novel equality operators for constant-weight codewords
    \item PIR using constant-weight equality operators
    \item Experimental evaluation of the equality operators
    \item Evaluation of constant-weight PIR and comparison with existing index PIR protocols
    \item Detailed analysis of constant-weight keyword PIR
\end{itemize}

\section{Background and Related Work}
\label{background}

\subsection{Homomorphic Encryption}

Homomorphic Encryption allows computation on encrypted data, without the need for decryption or access to the secret key. This maintains the secrecy of the data while computation is performed. One use case is a client delegating computation on its data to a remote, untrusted server.

The concept of homomorphic encryption was introduced by Rivest et al.~\cite{rivest1978data}.
In 2009, Gentry proved the existence of a \emph{fully homomorphic} cryptosystem based on lattices that can evaluate arbitrary functions on encrypted data~\cite{gentry2009fully}.

Multiple lattice-based cryptosystems were proposed following the seminal work of Gentry which improved the efficiency drastically~\cite{10.1007/978-3-642-13190-5_2,10.1145/2090236.2090262,10.1007/978-3-642-45239-0_4,10.1007/978-3-642-40041-4_5}. 
Many homomorphic cryptosystems are used in a \emph{leveled} fashion.  A leveled homomorphic cryptosystem allows only a predefined number of sequential multiplications, determined by the parameters of the cryptosystem. The Fan–Vercauteren cryptosystem is an example that we explain in the next subsection.

\subsubsection{Fan–Vercauteren (FV) Cryptosystem.}
The Fan–Vercauteren cryptosystem~\cite{fan2012somewhat} is a lattice-based cryptosystem where plaintexts are elements from the polynomial ring $R_t = \ZZ_t[x]/(x^N+1)$.
The \emph{polynomial modulus degree}, $N$, is a power of two and $t$ is the \emph{plaintext modulus}.
Messages must be encoded as a polynomial in the field before they can be encrypted.
An FV ciphertext is an array of polynomials, each from $R_q = \ZZ_q[x]/(x^N+1)$, where $q$ is called the \emph{coefficient modulus}. In the simplest case, the ciphertext is only two polynomials. Let $\mathcal{C}$ denote the ciphertext space. $N$ and $q$ determine both the security parameter and how many homomorphic operations can be performed on ciphertexts before decryption is necessary.

In addition to the standard operations for a cryptosystem, i.e., key generation, encryption and decryption, FV supports homomorphic operations over the ring as well. Four of these operations are listed below. All operations over plaintexts are in the ring $R_t$.

\begin{itemize}
\itemsep-1mm
    \item \textbf{Addition:} Given ciphertexts $c_1(x), c_2(x)\in \mathcal{C}$ that encrypt $m_1(x), m_2(x) \in R_t$, respectively, output $c_A(x)$ which encrypts $m_1(x) + m_2(x)$.
    \item \textbf{Plain Multiplication:} Given $m_1(x) \in R_t$ and $c_2(x)\in \mathcal{C}$ that encrypts $m_2(x)\in R_t$, output $c_{PM}(x)$ which encrypts $m_1(x)m_2(x)$.
    \item \textbf{Multiplication:} Given ciphertexts $c_1(x), c_2(x) \in \mathcal{C}$ that encrypt $m_1(x), m_2(x) \in R_t$, respectively, output $c_M(x)$ which encrypts $m_1(x) m_2(x)$.
    \item \textbf{Substitution:} Given $c(x)\in \mathcal{C}$ that encrypts $m(x)$ and an integer $k$, output $c_{S}(x)$ which encrypts $m(x^k)$.
\end{itemize}

In the rest of this paper, $\texttt{PM}$ and $\texttt{M}$ denote plaintext multiplication and homomorphic multiplication, respectively.

\subsubsection{Microsoft SEAL Library}
The SEAL library~\cite{sealcrypto} implements the FV cryptosystem and supports all the operations mentioned above. Specifically, the implementation for the substitution operation in this library was first introduced by Angel et al.~\cite{sealpir} based on the plaintext slot permutation technique discussed by Gentry et al.~\cite{10.1007/978-3-642-29011-4_28}.
One FV plaintext can encode $\polydegree\log_2t$ bits of data. Also, the size of the smallest ciphertext that encrypts a plaintext is $2\polydegree\log_2 q$ bits. An important parameter is the \emph{expansion factor} which is the ratio between the size of a ciphertext and the largest plaintext that can be encrypted and is equal to $F = 2\log q /\log t$.
In the rest of this paper, $F$ denotes the expansion factor of the FV cryptosystem.
\Cref{tab:seal-ops} compares the four described operations in terms of speed and noise grown, as implemented in SEAL 3.6.

\begin{table}[ht]
\centering
\caption[Runtime cost of operations in SEAL 3.6]{Runtime cost of operations in SEAL 3.6, for $\polydegree\in\{2048, 4096, 8192, 16384\}$ and the default ciphertext modulus. * Time and noise growth in plain multiplication also depend on the value of the unencrypted operand.}
\label{tab:seal-ops}
\resizebox{\columnwidth}{!}{%
\begin{tabular}{c|c|c|c|c|c} \toprule
\multirow{2}{*}{\textbf{Operation}} & \multicolumn{4}{c|}{\textbf{Time ($\mu s$)}} & \multirow{2}{*}{\textbf{Noise Growth}}\\ \cline{2-5}
 & $\polydegree=2048$ & $\polydegree=4096$ & $\polydegree=8192$ & $\polydegree=16384$ & \\ \midrule
        Addition & 6 & 19 & 67 & 435 & Additive \\\hline
        Plain Mult.$^{*}$ & 12--135 & 30--529 & 105--2201 & 509--9647 & Multiplicative \\\hline
        Multiplication & - & 3823 & 15744 & 66908 & Multiplicative \\\hline
        Substitution & - & 768 & 4137 & 26047 & Additive \\\bottomrule
\end{tabular}
}
\end{table}

\subsection{Private Information Retrieval}

Private Information Retrieval (PIR)~\cite{chor1995private} is a protocol where a user retrieves an element from a database, such that the owner of the database cannot determine which element was retrieved. There are two forms of PIR protocols. In the first form, which we denote \emph{index PIR}, the user holds the \emph{physical} address of the item, e.g., the row in a database table or the index in a public registry. In the second form, called \emph{keyword PIR}, the physical address of the desired item may not be known and it is only accessible by an identifier pertaining to the sought item, e.g., the name of a file.

The privacy guarantee of a PIR protocol can be information-theoretic or computational. Information-theoretic PIR (IT-PIR) is private even in the presence of a computationally unbounded adversary~\cite{chor1995private,Beimel2002BreakingTO,raidpir,Ambainis2000}.
Computational PIR (CPIR) relaxes the assumption to an adversary with bounded computational power.
In the single-server setting, which is the focus of this paper, solutions rely on some intractability assumption, e.g., the hardness of determining the quadratic residuosity modulo composite numbers~\cite{goldwasser1984probabilistic,646125} or the security of lattice-based cryptosystems~\cite{aguilar2016xpir, sealpir, Ali2019CommunicationComputationTI, 10.1007/978-3-319-11203-9_22, tcc-2019-30003,10.1007/978-3-662-44774-1_16,6189348}.

In CPIR solutions, each item in the database has to be processed at least once, otherwise, it can be trivially excluded from the list of potential queries and compromise privacy. Sion and Carbunar argued that the time required for any single-server CPIR protocol would exceed the time required for the trivial solution of simply downloading the entire database~\cite{sionpir}. Later work by Aguilar-Melchor et al. showed this argument to be incorrect with the use of lattice-based cryptosystems, which have smaller per-bit computation cost when used in a batched fashion~\cite{aguilar2016xpir}. They showed that PIR is a faster than downloading the database over low-bandwidth networks. 

\subsection{Single-Server computational PIR}
\label{sec:single-server-cpir}

Single-server computational PIR solutions aim to perform better than the \emph{trivial} solution of downloading the entire database. In the trivial solution, the \emph{download cost} for the user is equal to the size of the database, with no \emph{upload cost} for the user. Downloading the entire database also comes at almost no computational burden for the server, i.e., the \emph{computational cost} is zero.
We compare single-server CPIR protocols based on the upload, download, and computational cost.

CPIR protocols utilizing homomorphic encryption are the most practical solutions to date~\cite{aguilar2016xpir, sealpir, Ali2019CommunicationComputationTI}. All these solutions expand on a baseline method that works as follows:

\paragraph{Baseline PIR method.} Let $\db$ denote the database with $n$ rows and $\db[i]$ denote the $i^{th}$ row in this database. Also, throughout this paper, define $[n]=\{0,1,...n-1\}$, for any $n\in\NN$. When the goal is to retrieve row $q$, a response $r_q$ is derived as 
\begin{align}
   r_q = \sum_{i\in[n]} \mathbb{I}(i = q) \cdot \db[i].
   \label{eq:baseline-pir}
\end{align}
where $\mathbb{I}(\cdot)$ denotes an indicator function which is one when the input evaluates to true and zero otherwise. It is easy to verify that if $q\in[n]$ then $r_q = \db[q]$. Equation (\ref{eq:baseline-pir}) is an inner product between the database and a vector of bits called the \emph{selection vector}. For obtaining element $q$ in the database, the selection vector is one in index $q$ and zero otherwise.

PIR protocols realizing Equation (\ref{eq:baseline-pir}) encrypt the bits of the selection vector with a homomorphic encryption scheme that supports addition and plaintext multiplication and perform the operations in Equation (\ref{eq:baseline-pir}) over ciphertexts. In XPIR~\cite{aguilar2016xpir} and SealPIR~\cite{sealpir}, two recent practical solutions, an additive homomorphic encryption scheme is used. MulPIR~\cite{Ali2019CommunicationComputationTI} is the first practical solution using a fully homomorphic encryption scheme, which is also the case for our work.

The server requires ciphertexts of the bits of the selection vector, i.e., $\mathbb{I}(i=q)$, to realize Equation (\ref{eq:baseline-pir}).
There are two general approaches for the server to acquire the encrypted bits of the selection vector: 1) Communicating the selection vector 2) Equality Operators.

In the first approach, the user generates the selection vector locally, encrypts it and transmits it to the server. XPIR, SealPIR, and MulPIR all take this approach. XPIR uploads the entire selection vector but provides experiments to show the practicality of this approach~\cite{aguilar2016xpir}. Despite its practicality, the upload cost of XPIR is on the order of the number of rows in the database which limits scalability.

Recursion is a method to reduce the upload cost to sublinear in the size of the database. It was first used by Kushilevitz and Ostrovsky~\cite{646125} and later Stern~\cite{10.1007/3-540-49649-1_28}. This approach is also used in SealPIR and MulPIR. In the next section, we describe how recursion is done in SealPIR, which is conceptually similar to prior work.

\subsubsection{SealPIR}
SealPIR~\cite{sealpir} is a PIR scheme based on the SEAL library which uses a \emph{query compression} technique and \emph{recursion} to reduce the upload cost. They also use additive homomorphic encryption in a layered fashion.

In SealPIR, to communicate fewer ciphertexts, the user encodes multiple bits into one plaintext, which is called the query compression technique. Specifically, for a selection vector $(s_i)_{i\in[\dbsize]}$, the user constructs the plaintext $p(x) = \sum_{i\in[\dbsize]}s_i x^i$ and encrypts it.  Recall that in SEAL, plaintexts are polynomials of degree at most $N$, so if the size of the selection vector exceeds the polynomial degree, $\ceil{n/\polydegree}$ ciphertexts are used.
As a consequence of the compression technique, SealPIR performs a novel \emph{oblivious expansion} on the server to extract a vector of ciphertexts such that each bit of the selection vector is in a separate ciphertext. SealPIR uses the substitution operation to perform the oblivious expansion. \Cref{alg:oblivious-expand-sealpir} depicts this procedure for expanding one ciphertext into a vector of $2^c$ ciphertexts, for $c \in \{0,1,...,\log_2 N\}$.

\begin{algorithm}[!ht]
	 \caption[]{\textsc{SealPIR Oblivious Expansion}}
	 \label{alg:oblivious-expand-sealpir}
	 	\vspace{-3mm}
	 \begin{flushleft}
		 \textbf{Input:} $ct(x) \in \mathcal{C}$, $c \in \{0,1,...,\log_2 N\}$
	 \end{flushleft}
	 	\vspace{-3mm}
	 \begin{algorithmic}[1]
        \State $cts \leftarrow [ct(x)]$
        \For {$a\in [c]$}
            \For {$b\in [2^a]$}
                \State $c_0 = cts[b]$
                \State $c_1 = x^{2^{-a}} \cdot c_0$
                \State $cts[b] = c_0 + \texttt{Sub}_{N/2^{a}+1}(c_0)$
                \State $cts[b+2^a] = c_1 + \texttt{Sub}_{N/2^{a}+1}(c_1)$
            \EndFor
        \EndFor
        \State $inv = (2^{-c} \mod t)$
        \For {$i \in [2^c]$}
            \State $cts[i] \leftarrow inv \cdot cts[i]$
        \EndFor
	 \end{algorithmic}
	 	\vspace{-3mm}
	 \begin{flushleft}
    	 \textbf{Output:}  $cts \in \mathcal{C}^{2^c}$\\
	 \end{flushleft}
	 	\vspace{-3mm}
\end{algorithm}

To further reduce the upload cost, SealPIR uses a technique called \emph{recursion} in which the database is restructured into a $d$-dimensional table.
The users query is translated into a coordinate in this $d$-dimensional table.
Then instead of one selection vector, $d$ selection vectors are sent to the server, one for each dimension. We refer to $d$ as the \emph{recursion level}. The total size of the query is at least $d\ceil{\sqrt[d]{\dbsize}}$ which is sublinear in $\dbsize$ for any $d\geq2$.

To calculate the response to the query using the selection vectors, $d$ inner products are performed in sequence.
In SealPIR, an additive homomorphic encryption scheme is used so the multiplication in the first inner product is performed as a plaintext multiplication. However, the subsequent multiplications are between ciphertext, which is not supported. To overcome this issue, one ciphertext is treated as a plaintext in the multiplication. This is referred to as \emph{layered encryption} and results in the size of the response multiplying by a factor of $F$ where $F$ is the expansion factor of the ciphertext. More generally, the size of the response is multiplied by a factor of $F^{d-1}$ for recursion level equal to $d$. Overall, SealPIR performs $\sum_{i=0}^{d-1} n^{\frac{d-i}{d}}F^i$ plaintext multiplications for recursion level $d\geq 1$ and expansion factor of $F$ for the ciphertext.

Ali et al. proposed three additional optimizations to SealPIR to reduce the upload and download cost~\cite{Ali2019CommunicationComputationTI}. These three optimizations are: compressing the uploaded ciphertexts by encrypting using the secret key instead of the public key, compressing the response ciphertexts using modulus switching, and a modified oblivious expansion to fit more bits into the one ciphertext. Throughout this paper, \emph{SealPIR} denotes this modified version of the protocol.

\subsubsection{MulPIR}
MulPIR~\cite{Ali2019CommunicationComputationTI} replaces the layered encryption in SealPIR with homomorphic multiplications. This reduces the download cost drastically compared to SealPIR. However, it comes at the cost of increased computation for the server since homomorphic multiplications are more expensive than plain multiplications and larger parameters are required to allow more homomorphic multiplications.
Overall, MulPIR performs $\dbsize$ plaintext multiplications and $\sum_{i=1}^{d-1} n^{\frac{d-i}{d}}$ homomorphic multiplications, for a recursion level $d\geq 1$.

In SealPIR, due to the expansion in the response, the server can not perform any post-processing on the output which is a disadvantage of the protocol.
Examples of post-processing include deriving functions of the user's query or conjunctive and disjunctive PIR queries.
In contrast to SealPIR, the output of the MulPIR protocol can be post-processed before being sent back to the user. Ali et al.~\cite{Ali2019CommunicationComputationTI} describe how to perform conjunctive and disjunctive queries using MulPIR.

\subsection{Equality Operators}
\label{sec:eq-op}

Checking the equality of two values is an integral step in many tasks over encrypted data such as secure search~\cite{Akavia2019,Akavia_Feldman_Shaul_2019}, secure pattern matching~\cite{10.1145/2517488.2517497,10.1145/3411495.3421361}, private set intersection~\cite{kacsmar2020differentially,10.1145/3133956.3134061}, and PIR~\cite{chor1995private}.

We define an \emph{equality operator} as follows.

\begin{definition}[Equality Operator]
A procedure $f$ is an equality operator over a domain $D$ if $\forall x,y\in D$, 
\begin{align}
	f(x,y)=
    \left\{
    	\begin{array}{ll}
    		1  & \mbox{if\ \ } x=y \\
    		0 & \mbox{o.w.}
    	\end{array}
    \right.
\label{eq:equality-circuit}    
\end{align}

\end{definition}

In this section, we define two equality operators over their respective domains and derive the multiplicative depth of a circuit implementing each one. More operators exist which are summarized in a previous version of this work~\cite{rasoul-mmath}. When working with an element $x\in\{0,1\}^{\ell}$, we treat it as a string of bits and refer to the bits of the string by indexing, i.e., $x[i]$ denotes the $i^{th}$ bit of $x$.

\paragraph{Arithmetic Folklore Equality Operator.} This operator is used to compare two numbers in binary format. For a domain $D=\binaryset$, define $f_{AF}$ as
\begin{align}
    f_{AF}(x,y) = \prod_{i=0}^{\ell-1} \left(1 - (x[i] - y[i])^{2}\right)
\end{align}
for $x,y \in \binaryset$.
This operator is correct when operating over any field such as $\ZZ_p$.
The multiplicative depth of a circuit realizing this operator is equal to $1 + \ceil{\log_2\ell}$, where $\ell$ is the bit-length of the operands. The arithmetic folklore operator is oblivious to both input operands. This is critical in some applications, e.g., comparing two encrypted or secret shared numbers.

When one operator is public, the arithmetic folklore equality operator can be modified to perform less operations with a smaller multiplicative depth. The modified operator is as follows.

\paragraph{Plain Folklore Equality Operator.} For a domain $D=\binaryset$, define $f_{PF}$ as
\begin{align}
    f_{PF}(x,y) = \prod_{y[i]=0} \left( 1-x[i] \right) \prod_{y[i]=1} x[i]
    \label{folklore-plain-eq}
\end{align}
for $x,y \in \binaryset$.
This operator depends on the public operand, which is $y$ in this case. 
The multiplicative depth of a circuit realizing this operator is equal to $\ceil{\log_2\ell}$, where $\ell$ is the bit-length of the operands.

\subsubsection{PIR using Equality Operators}
As mentioned in \Cref{sec:single-server-cpir}, equality operators are another approach to PIR. In this approach, the user's query is encoded into some domain, encrypted and sent to the server. The server computes each bit of the selection vector, i.e., $\mathbb{I}(i=q)$, using an equality operator between the user's encrypted query and each identifier in the database. Then the server, using the encrypted bits of the selection vector, derives the encryption of $r_q$ using Equation (\ref{eq:baseline-pir}), which is then sent back to the user for decryption.

PIR with this approach using the folklore equality operator has the smallest upload cost amongst all non-trivial approaches. In this approach, only the optimal logarithmic binary encoding of the query is encrypted and uploaded. However, the computation cost is prohibitively high due to the multiplicative depth of the folklore equality circuit which depends on the number of rows in the database.
In general, PIR using equality operators is assumed to be impractical due to the high multiplicative depth of equality circuits as parameters scale~\cite{sealpir, Ali2019CommunicationComputationTI}. This work challenges this assumption.

\subsection{Keyword PIR}
\label{sec:keyword-pir}
In keyword PIR, a user retrieves an element from a database using a keyword or identifier pertaining to the sought item.
Another way to phrase this is that in index PIR, all addresses correspond to an element in the database, whereas in keyword PIR, some keywords may not correspond to any element.
Note that keyword PIR implies that the user does not know which keywords are present in the database. Otherwise, the user can simply refer to its desired keyword by its position in the list of sorted keywords.

Previous work has suggested solutions for keyword PIR which all basically reduce keyword PIR to index PIR.
Chor et al. suggested two solutions where the user interactively queries the server to privately obtain the physical address of the desired item, given the identifier~\cite{chor1997private}. With the physical address, the user then conducts index PIR to retrieve the sought item.
A common solution also proposed by Ali et al. involves a probabilistic hashing technique to map keywords into a small table such that index PIR is feasible~\cite{Ali2019CommunicationComputationTI}.

PIR using equality operators is another approach to keyword PIR, where the user's query is compared to all the keywords present in the database. However, since the cost of comparing keywords is prohibitively high for large keywords, this approach is assumed to be impractical. This work proposes a PIR protocol for index PIR which can be easily extended to keyword PIR with minimal change. Moreover, the practical computational cost of the constant-weight equality operator results in a practical keyword PIR protocol.

\section{Constructions for Constant-weight Codes}
\label{constructions}

In this section, we describe our constructions. First, we propose equality operators for constant-weight codewords. Then we describe efficient mappings from other domains to constant-weight codewords to facilitate the use of our proposed operator in other contexts. Finally, we explain PIR using constant-weight codewords in detail.
\paragraph{Constant-weight Code.} A \emph{constant-weight code}, or an \emph{m-of-n code}, is a form of error detecting code where all codewords share the same Hamming weight. A \emph{binary constant-weight code} has the additional condition that all codewords are binary strings. The one-hot (unary) code and the \emph{balanced code} are two examples of a binary constant-weight code. In a balanced code, the number of ones is equal to the number of zeros in all codewords. 

The length of a code is the maximum bit-length of its codewords and the size of the code is the number of distinct codewords. For a binary constant-weight code of length $m$ and Hamming weight of $k$, the size is $\binom{m}{k}$. For a fixed Hamming weight $k$, to have a binary constant-weight code with a size of at least $n$, we must choose the length, $m$, such that $\binom{m}{k} \geq n$. By one approximation, we have $m \in O\left(\sqrt[k]{k!n} + k\right)$.
We denote the binary constant-weight code with length $m$ and Hamming weight $k$ by $CW(m,k)$.

In all the constructions, $k$ and $m$ denote the Hamming weight and code length, respectively.

\subsection{Equality Operators for Constant-weight Codewords}
\label{sec:eq-constant-weight-code}

We propose two variants of the equality operator over constant-weight codewords in this section. A third construction over a binary field is given in a previous version of this work~\cite{rasoul-mmath}.

\paragraph{Plain Constant-weight Equality Operator.} For two constant-weight codewords $x,y\in CW(m,k)$, 
\begin{align}
    f_{PCW}(x,y) = \prod_{y[j]=1} x[j]
    \label{eq:plain-cw-eq}
\end{align}
is the \emph{plain equality operator}. This operator is oblivious to the first operand but depends on the second. A circuit realizing this operator performs $k$ multiplications with a multiplicative depth of $\ceil{\log_2 k}$.

\paragraph{Arithmetic Constant-weight Equality Operator.}
For two constant-weight codewords $x,y\in CW(m,k)$, 
\Cref{alg:arith-cw-eq} describes the \emph{arithmetic equality operator} over constant-weight codewords. \Cref{alg:arith-cw-eq} operates over any field in which $k!$ has a multiplicative inverse.

\begin{algorithm}[!ht]
	 \caption[]{\textsc{Arithmetic Constant-weight Equality Operator}}
	 \label{alg:arith-cw-eq}
	 	\vspace{-3mm}
	 \begin{flushleft}
		 \textbf{Input:} $x, y \in CW(m,k)$
	 \end{flushleft}
	 	\vspace{-3mm}
	 \begin{algorithmic}[1]
	 	\State $k' = \sum\limits_{i\in[m]}x[i] \cdot y[i]$
	 	\vspace{2mm}
	 	\State $e = \frac{1}{k!} \prod\limits_{i\in[k]} (k' - i)$
	 	\vspace{2mm}
	 \end{algorithmic}
	 	\vspace{-5mm}
	 \begin{flushleft}
    	 \textbf{Output:}  $e \in \{0,1\}$\\
	 \end{flushleft}
	 	\vspace{-3mm}
\end{algorithm}

\begin{theorem}
For $x, y \in CW(m,k)$, if $f_{ACW}(x,y)$ is the output of \Cref{alg:arith-cw-eq}, then $f_{ACW}(x,y)$ is an equality operator.
\end{theorem}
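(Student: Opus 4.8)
The plan is to analyze the quantity $k' = \sum_{i\in[m]} x[i]\cdot y[i]$ computed in line 1 of \Cref{alg:arith-cw-eq}. Since $x,y\in CW(m,k)$ are binary strings, $k'$ simply counts the number of positions $i$ at which both $x[i]=1$ and $y[i]=1$, i.e., $k' = |\{i : x[i]=y[i]=1\}|$, which is the size of the intersection of the support sets of $x$ and $y$. Because both supports have size exactly $k$, we have $0 \le k' \le k$, with the key observation that $k' = k$ if and only if the two supports coincide, which (both being binary constant-weight codewords) happens if and only if $x=y$.

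Next I would examine the product in line 2, $e = \frac{1}{k!}\prod_{i\in[k]}(k'-i) = \frac{1}{k!}\prod_{i=0}^{k-1}(k'-i)$. The factors range over $k'-0, k'-1,\dots,k'-(k-1)$. I would split into two cases. If $x=y$, then $k'=k$ and the product is $k\cdot(k-1)\cdots 1 = k!$, so $e = k!/k! = 1$. If $x\neq y$, then $k'\in\{0,1,\dots,k-1\}$, so one of the factors $k'-i$ (namely the one with $i=k'$) is zero, hence the whole product is $0$ and $e=0$. This establishes that $e$ takes exactly the values prescribed by the equality-operator definition, Equation~(\ref{eq:equality-circuit}). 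I should also note explicitly that $e\in\{0,1\}$ as claimed in the algorithm's output line, which follows from the case analysis.

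The one point requiring care — and the closest thing to an obstacle — is the division by $k!$: this must be well-defined, which is exactly the hypothesis that the operator is evaluated over a field in which $k!$ has a multiplicative inverse (as stated in the paragraph preceding \Cref{alg:arith-cw-eq}). I would remark that this is a mild condition: it holds over $\ZZ_t$ whenever $t$ is a prime larger than $k$, or more generally whenever $\gcd(t,k!)=1$, which is easy to arrange since $k$ is small in all our applications. With that caveat stated, the argument is a direct two-case computation and no deeper machinery is needed; the substance of the lemma is really the combinatorial identity "$k' = k \iff x = y$" for constant-weight codewords, together with the fact that the falling-factorial-style polynomial $\prod_{i=0}^{k-1}(z-i)$ has roots exactly at $z\in\{0,1,\dots,k-1\}$ and value $k!$ at $z=k$.
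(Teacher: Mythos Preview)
Your proposal is correct and follows essentially the same argument as the paper: show that $k'=k$ if and only if $x=y$ (and $k'\in\{0,\dots,k-1\}$ otherwise), then observe that the polynomial $\frac{1}{k!}\prod_{i=0}^{k-1}(z-i)$ evaluates to $1$ at $z=k$ and to $0$ at each $z\in\{0,\dots,k-1\}$. Your write-up is in fact slightly more explicit than the paper's (you spell out why the product equals $k!$ or $0$, and you flag the invertibility of $k!$), but the approach is identical.
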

\begin{proof}
If $x$ and $y$ are equal, the position of bits equal to one in their encodings are identical, and consequently, the inner product, $k'$, will be equal to $k$. When they are not equal, the inner product will be in the set $\{0,1,...,k-1\}$. Also, based on the definition of $e$ on line 2 of \Cref{alg:arith-cw-eq}, it holds that
\begin{align}
	e=
    \left\{
    	\begin{array}{ll}
    		1 & k' = k \\
    		0 & k' \in \{0,1,...,k-1\}
    	\end{array}
    \right.
\end{align}

Putting these two together, $e$ will be one, if and only if $x$ and $y$ are equal and zero otherwise.
\end{proof}

A circuit realizing this operator performs $m+k$ multiplications with a multiplicative depth of $1+\ceil{\log_2 k}$.

\subsection{Mappings to Constant-weight Codewords}
\label{sec:mappings}

The domain of all the operators described in this section is a constant-weight code. To benefit from these constructions in a setting where we want to compare elements from other domains, we also propose efficient mappings from other domains to constant-weight codewords. The goal is for the mapping (and inverse mapping) procedure to be efficient and less expensive than storing an equivalence table. We describe the perfect mapping below and detail the inverse perfect mapping and the lossy mapping in \Cref{sec:more-mappings}.

\paragraph{Perfect Mapping.}
This mapping is used to map numbers in the set $[n]$ to $CW(m,k)$ such that it is injective and has an inverse. To have the injective property, the code size must be at least $n$, i.e., $|CW(m,k)| = \binom{m}{k} \geq n$. The mapping procedure is given in \Cref{alg:mapping-perfect}.

\begin{algorithm}[!ht]
	 \caption[]{\textsc{Perfect Mapping}}
	 \label{alg:mapping-perfect}
	 	\vspace{-3mm}
	 \begin{flushleft}
		 \textbf{Input:} $x \in [n]$, $m,k \in \NN$ such that $\binom{m}{k} \geq n$ \vspace{-3mm}
	 \end{flushleft}
	 \begin{algorithmic}[1]
	 	\State $r= x$
	 	\State $h = k$
	 	\State $y= 0^m$
	 	\For {$m'  = m-1, ..., 1, 0 $}
		 	\If {$r \geq \binom{m'}{h}$} \\
				\hspace{10mm}$y[m']=1$ \\
				\hspace{10mm}$r = r - \binom{m'}{h}$ \\
				\hspace{10mm}$h = h-1$
			\EndIf
			\If {$h=0$}
			    break
			\EndIf
	 	\EndFor
	 \end{algorithmic}
	 	\vspace{-3mm}
	 \begin{flushleft}
    	 \textbf{Output:}  $y \in CW(m,k)$\\
	 \end{flushleft}
	 	\vspace{-3mm}
\end{algorithm}

Intuitively, this procedure is assigning the $i^{th}$ valid codeword from a sorted list of codewords to the number $i$. Creating this list and extracting the mapping corresponding to a number would be prohibitively expensive with an average complexity of $\theta\left(\binom{m}{k}\right)$. The complexity of our mapping procedure is $O(m+k)$.

\subsection{PIR using Constant-weight Codewords}
\label{pir-constant-weight-code}

In this section, we describe our protocol for PIR using constant-weight codewords, which we name \emph{constant-weight PIR}. Our protocol follows the approach using equality operators with the plain constant-weight equality operator at its core. It is the first practical and scalable PIR protocol using the equality operator approach.

The PIR protocol is conducted between a server and user. The server holds a database, $\mathbb{DB}$, with $n$ identifiers.
Each identifier corresponds to some payload data in the database.
We denote the set of identifiers in the database by $\mathbb{ID}$. The user holds a query $q$ from the domain of identifiers which we denote by $S(\mathbb{ID})$. We know by definition that $\mathbb{ID} \subseteq S(\mathbb{ID})$, but the user's query might not necessarily be in the database. Previous work, including SealPIR and MulPIR, focuses mainly on PIR when $|S(\mathbb{ID})|=|\mathbb{ID}|=n$, i.e., index PIR. In contrast, our work is applicable for both index and keyword PIR.
We first describe constant-weight PIR for index PIR and explain how to expand our construction to keyword PIR in \Cref{sec:pir-sparse}.

The protocol consists of four main stages: Setup, Query, Process, and Extract. 
The Setup is an offline stage, whereas the other three stages happen online.
An offline stage does not depend on the user's query and the server can perform this stage before the user sends its query to reduce latency. \Cref{tab:steps} summarizes the stages of our PIR protocol.
In the following sections, we describe each stage in detail.

\begin{table}
    \centering
    \caption[Stages of PIR using constant-weight codewords]{Stages of PIR using constant-weight codewords}
    \label{tab:steps}
    \resizebox{\columnwidth}{!}{%
    \begin{tabular}{c|c|c|c} \toprule
         \textbf{Stage} & \textbf{Performed by} & \textbf{Functionality} & 
        \begin{tabular}[c]{@{}c@{}}\textbf{Comp.}\\\textbf{Complexity}\end{tabular}
         \\ \midrule
         Setup & Server (Offline) & Set Parameters, Put DB in plaintext format & $O(n)$\\ \cline{1-4}
         Query & Client & Construct query, Send to Server & $O(m)$\\ \cline{1-4}
         \multirow{3}{*}{Process} & \multirow{3}{*}{Server} & Query Expansion & $O(m)$\\ \cline{3-4}
          &  & Selection Vector Calculation & $O(n)$ \\ \cline{3-4}
          &  & Inner product with DB & $O(ns)$\\ \cline{1-4}
         Extract & Client & Decrypt \& decode the server's response & $O(s)$ \\ \bottomrule
    \end{tabular}
    }
\end{table}

\subsubsection{Setup}

In this stage, parameters for the homomorphic encryption system are chosen such that they meet the security requirements. The payload data within each row of the database is then converted into FV plaintexts. Only the contents of each database row must be converted to plaintexts, not the set of identifiers.
However, the constant-weight code corresponding to each identifier can be calculated and stored in this stage to reduce the runtime in the online stages.
This stage can be done without regard to the user's query and only depends on the choice of encryption parameters. After this offline stage, the server holds a table of plaintexts with $n$ rows and at most $s$ plaintexts in each row, for some $s \geq 1$.

\subsubsection{Query}
In this stage, the user constructs its query in the appropriate format and sends it to the server. First, parameters for the user's query are chosen.
The Hamming weight, $k$, is chosen and then the code length $m$, is derived such that $\binom{m}{k} \geq n$.
The user then constructs its query as depicted by \Cref{alg:plaintext-query}. Let $q\in S(\mathbb{ID})$ denote the user's query. The user maps its query to a constant-weight codeword from $CW(m,k)$.
Let $E_q$ denote the mapping of $q$. $E_q$ is then converted to FV plaintexts as shown in lines 2--4 of  \Cref{alg:plaintext-query}.
The compression factor, $c$, indicates how many bits of the user's query are in each plaintext. Specifically, for $c\in \{0,1,...,\log_2 N\}$, exactly $2^c$ bits are in each plaintext. A higher compression factor reduces the upload cost but requires more computation for decompression, as we will see the next stage.
Finally, the plaintexts are encrypted using the user's secret key.
The client sends the output of \Cref{alg:plaintext-query} along with $m$, $k$, and $c$ to the server for the next stage.

\begin{algorithm}[!ht]
	 \caption[]{\textsc{Query}}
	 \label{alg:plaintext-query}
	 	\vspace{-3mm}
	 \begin{flushleft}
		 \textbf{Input:} $q\in S(\mathbb{ID})$, $m, k \in \NN$, $c \in \{0, 1, ..., \log_2 N\}$
	 \end{flushleft}
	 	\vspace{-3mm}
	 \begin{algorithmic}[1]
	    \State $E_q \leftarrow \texttt{MapToConstantWeightCode}(q, m, k)$
	    \State $h = \lceil\frac{m}{2^c} \rceil$
	 	\For {$i \in [h]$}
	 		\State $m_i(x) = \sum\limits_{j \in [2^c]} 2^{-c} \cdot E_q[i2^{c} + j] \cdot x^j$
	 	\EndFor
	    \For {$i \in [h]$}
	 		\State $ct_i(x) = \texttt{Enc}(\texttt{sk}, m_i(x))$
	 	\EndFor
	 \end{algorithmic}
	 	\vspace{-3mm}
	 \begin{flushleft}
    	 \textbf{Output:}  $(ct_i(x))_{i\in [h]}$
	 \end{flushleft}
	 	\vspace{-3mm}
\end{algorithm}

\subsubsection{Process Query}
This stage consists of three steps which are done by the server: Query Expansion, Selection Vector Calculation, and Inner Product.

\paragraph{Query Expansion.} In the first step, the server expands the ciphertexts received from the user such that each bit of the user's query is in a separate ciphertext.
\Cref{alg:oblivious-expand} describes the query expansion procedure, which is a modified version of \Cref{alg:oblivious-expand-sealpir}. We replace the use of two substitutions and one plaintext multiplication in the inner loop of \Cref{alg:oblivious-expand-sealpir} with one substitution and two plaintext multiplications.
Since substitution is slower compared to plain multiplication, as indicated in \Cref{tab:seal-ops}, there is an overall speedup. This modification in the expansion algorithm was first adopted in the implementation of MulPIR from the OpenMined community.\footnote{\url{https://github.com/OpenMined/PIR}}

\begin{algorithm}[!ht]
	 \caption[]{\textsc{Query Expansion}}
	 \label{alg:oblivious-expand}
	 	\vspace{-3mm}
	 \begin{flushleft}
		 \textbf{Input:} $(ct_j(x)) \in \mathcal{C}^{\ceil{\frac{m}{2^c}}}$, $m\in\NN$, $c \in \{0,1,...,\log_2 N\}$
	 \end{flushleft}
	 	\vspace{-3mm}
	 \begin{algorithmic}[1]
	    \State $h = \ceil{\frac{m}{2^c}}$
	    \State  $ctxts \leftarrow []$
	 	\For {$j\in[h]$}
	 		\State $cts \leftarrow [ct_j]$
	 	    \For {$a\in[c]$}
	 	        \For {$b\in[2^a]$}
	 	            \State $c_0 \leftarrow cts[b]$
	 	            \State $c_0 \leftarrow \texttt{Sub}_{N/2^{a}+1}(c_0)$
                    \State $c_1 \leftarrow x^{-2^a} \cdot c_0$

	 	            \State $cts[b + 2^a] \leftarrow x^{-2^a} \cdot cts[b]$
                    \State $cts[b] \leftarrow cts[b] + c_0$
                    \State $cts[b + 2^a] \leftarrow cts[b + 2^a] - c_1$
	 	        \EndFor
	 	    \EndFor
	 	    \State $ctxts \leftarrow ctxts || cts$
	 	\EndFor
	 \end{algorithmic}
	 	\vspace{-3mm}
	 \begin{flushleft}
    	 \textbf{Output:}  $ctxts \in C^m$\\
	 \end{flushleft}
	 	\vspace{-3mm}
\end{algorithm}

In \Cref{appendix-expansion}, we prove the correctness of this procedure by showing it is equivalent to \Cref{alg:oblivious-expand-sealpir}, which has been proven to be correct by Angel et al.~\cite{sealpir}. The for loop on line 6 of \Cref{alg:oblivious-expand} can be executed in parallel.

The output of this step is a vector of $m$ ciphertexts, where each ciphertext contains one of the bits of $E_q$, i.e., the encoded query. 

\paragraph{Selection Vector Calculation.} In this step, the server creates the selection vector using the expanded query from the output of the previous step.
For this, the server iterates over $\mathbb{ID}$, the set of identifiers in the database, maps each identifier to a constant-weight codeword and performs the equality operator between the mapped identifier and the user's query.
The constant-weight codeword corresponding to each identifier is calculated in the Setup stage to reduce online runtime.
We use the plain constant-weight equality operator since one of the operators is unencrypted.
\Cref{alg:selection-vector} depicts this step with the output from the query expansion as input.

\begin{algorithm}[!ht]
	 \caption[]{\textsc{Selection Vector Calculation}}
	 \label{alg:selection-vector}
	 	\vspace{-3mm}
	 \begin{flushleft}
		 \textbf{Input:} $ctxts \in \mathcal{C}^{m}$\\
	 \end{flushleft}
	 	\vspace{-3mm}
	 \begin{algorithmic}[1]
        \State $sel \leftarrow []$
	    \For {$i \in [n]$}
	        \State $E \leftarrow \texttt{MapToConstantWeightCode}(\mathbb{ID}[i], m, k)$
	        \State $sel[i] = \prod\limits_{E[j]=1} ctxts[j]$
	    \EndFor
	 \end{algorithmic}
	 	\vspace{-3mm}
	 \begin{flushleft}
    	 \textbf{Output:}  $sel \in \mathcal{C}^{n}$
	 \end{flushleft}
	 	\vspace{-3mm}
\end{algorithm}

This is the most computationally expensive step of the protocol, however, it can be done in parallel across the identifiers in the database. The output of this stage is an encrypted selection vector of size $n$, with each bit in a separate ciphertext.

\paragraph{Inner Product.}
In the last step of this stage, an inner product is performed between the selection vector derived from the previous step and the database. Each row of the database contains at most $s$ plaintexts from the setup phase, hence $s$ inner products are performed and $s$ ciphertexts are sent to the user as the response. Each inner product operation includes $n$ plaintext multiplication which can be done in parallel. The $s$ inner products can also be done in parallel when $s$ is large to enhance performance. The output of the inner products is sent to the user for the next stage.

\subsubsection{Extract}
In the last stage, the user decrypts the ciphertext(s) received from the server. The results are extracted from the decrypted messages by the client.

\subsubsection{Constant-weight Keyword PIR}
\label{sec:pir-sparse}

Recall that $\mathbb{ID}$ is the list of identifiers in the database, and $S(\mathbb{ID})$ refers to the domain of identifiers, i.e., the set of all possible identifiers. By definition, $\mathbb{ID} \subseteq S(\mathbb{ID})$. In the previous sections, we have discussed PIR in the case where $\mathbb{ID}=S(\mathbb{ID})$. Related work has also mainly focused on PIR under this assumption~\cite{sealpir,Ali2019CommunicationComputationTI}. A sparse database, however, specifies the case where $\mathbb{ID}$ where is much smaller than $S(\mathbb{ID})$. In this case, not all identifiers in the domain are associated with an element in the database.

The architecture described in this section is applicable when the database is sparse, with computation on the order of $|\mathbb{ID}|$, not $|S(\mathbb{ID})|$. For this, the following changes must be made to the protocol.

\begin{itemize}
\itemsep0mm
    \item In the query stage, the code length, $m$, and Hamming weight, $k$, are chosen such that $\binom{m}{k} \geq |S(\mathbb{ID})|$.
    \item In the selection vector calculation step, encrypted bits of the selection vector are generated only for identifiers in the database, i.e., the for loop on line 4 of \Cref{alg:selection-vector} is performed only over the identifiers in the database. Hence, this step is unchanged.
    \item Similarly in the inner product step, we only perform plain multiplications and sum for identifiers in the database.
\end{itemize}

PIR solutions based on selection vectors have a computational complexity that depends on the domain size, which makes them unsuitable for keyword PIR. We examine this further \Cref{sec:constant-weight-keyword-pir}.

\section{Evaluation of Equality Operators}
We evaluate equality operators in two categories:
\begin{itemize}
    \item Plain equality operators, where one operand is public, i.e., the circuit depends on one of the operands. We consider two candidates in this category: the plain folklore and the plain constant-weight equality operator.
    \item Arithmetic equality operators, where the circuit is oblivious to both operands and operates over an arbitrary field. We consider the arithmetic folklore and the arithmetic constant-weight equality operators in this category.
\end{itemize}

\Cref{tab:eq-ops-properties} summarizes these operators, along with the properties of circuits that implement each of them. We include properties that significantly influence the runtime such as the number of homomorphic and plain multiplications and the multiplicative depth. Note that different circuits operate over different domains, which are stated in \Cref{tab:eq-ops-properties}, but for a fair comparison, we select parameters such that the size of all the domains is at least $n$. To meet this criteria, the required condition for each of the operators is listed in the table.

\begin{table}[!ht]
\centering
\caption{Properties of circuits implementing equality operators mentioned in this work.
}
\label{tab:eq-ops-properties}
\resizebox{\columnwidth}{!}{
\begin{tabular}{ccccc}
\toprule
    \textbf{Operator} & \textbf{Domain} & \textbf{\# of Operations} & \begin{tabular}{c} \textbf{Multiplicative}\\\textbf{Depth} \end{tabular}  & \textbf{Conditions} \\ \midrule
    Plain Fl. & $\binaryset$ & $\ell \cdot \texttt{M}$ & $\ceil{\log_2 \ell}$ & $\ell \geq \log_2 n$ \\
    Plain Cw & $CW(m,k)$ & $k \cdot \texttt{M}$  & $\ceil{\log_2 k}$ & $\binom{m}{k} \geq n $ \\ \midrule
    Arithmetic Fl. & $\binaryset$ & $ 2\ell \cdot \texttt{M}$ & $1+\ceil{\log_2 \ell}$ & $\ell \geq \log_2 n$ \\
    Arithmetic Cw & $CW(m,k)$ & $\texttt{PM} + (m+k) \cdot \texttt{M}$  & $\ceil{\log_2 k}$ & $\binom{m}{k} \geq n $ \\
    \bottomrule
\end{tabular}
}
\end{table}

In the experiments, we vary the domain size, $n$, to observe the effect on the performance of the circuit implementing each operator.
Our implementation of all the equality circuits is open-source and available on Github\footnote{\url{https://github.com/RasoulAM/constant-weight-pir}}.
We implement the circuits using C++ and the SEAL library (version 3.6).
For the SEAL library, we use three different encryption parameters specified by $N$, the polynomial modulus degree, where $N\in\{4096, 8192, 16384\}$. The default ciphertext modulus is used to achieve 128-bit security. We also run all experiments both in single-thread and in parallel across multiple cores. The goal is to observe the speedup in each circuit when run in parallel.

All circuits are run in a SIMD fashion using the batch encoding functionality of SEAL. Using this feature, $N$ elements can be compared at the same time. In plain operators, since the circuit depends on the plain operands, $N$ elements are compared to the same operand in the clear. This is not the case for the arithmetic operand, in which $N$ pairs of numbers are compared simultaneously. The runtime can be divided by $N$ to achieve the amortized cost of one equality check.

We run all experiments on an Intel Xeon E5-4640 @ 2.40GHz server running Ubuntu 16.04. Parallelization is performed using 32 physical cores.

\subsection{Plain Operators}

\Cref{tab:plain-eq} summarizes the results of our experiments for plain equality operators. Each column reports the runtimes for a specific domain size. We report the results for the plain constant-weight operator in four categories based on the relationship between $\log_2 n$ and $k$.

\begin{table}[!ht]
    \centering
    \caption[Runtimes for plain equality operators]{Runtimes for plain equality operators in seconds. Dashes indicate cases where the ciphertext was undecryptable due to homomorphic noise. $k$ and $m$ denote the Hamming weight and constant-weight code length, respectively. Bold numbers indicate the best runtimes for each $n$.}
    \label{tab:plain-eq}
    \resizebox{\columnwidth}{!}{
        \begin{tabular}{c|cccccccc} \toprule
         & $n$ & $2^{8}$& $2^{16}$ & $2^{32}$ & $2^{64}$ & $2^{128}$ & $2^{256}$ & $2^{512}$ \\ \bottomrule \toprule
        \multirow{4}{*}{\begin{tabular}[c]{@{}c@{}}Plain\\Folklore\end{tabular}} & $\ell$ & 8& 16 & 32 & 64& 128 & 256 & 512 \\ 
         & Mult Depth & 3& 4& 5& 6 & 7 & 8 & 9 \\     
         & $N=8192$ & 0.27 & 0.54 & - & - & - & - & - \\
         & $N=16384$& 1.1 & 2.4 & 5.0 & 10 & 21 & 42 & 84 \\ \bottomrule \toprule
        \multirow{5}{*}{\begin{tabular}[c]{@{}c@{}}Plain\\Constant-\\weight\\$k=\log_2 n$\end{tabular}} & \begin{tabular}[c]{@{}c@{}}$k$\end{tabular} &  8& 16 & 32 & 64& 128 & 256 & 512\\
         & Mult Depth & 3& 4& 5& 6 & 7 & 8 & 9 \\
         & \begin{tabular}[c]{@{}c@{}}$m$\end{tabular}& 12 & 22 & 43 & 85& 168 & 334 & 665\\
         & $N=8192$ & 0.27 & 0.57 & - & - & - & - & - \\
         & $N=16384$& 1.1 & 2.6 & 4.9 & 10 & 20 & 40 & 81 \\ \bottomrule \toprule
        \multirow{5}{*}{\begin{tabular}[c]{@{}c@{}}Plain\\Constant-\\weight\\$k=\frac{1}{2}\log_2 n$\end{tabular}} & \begin{tabular}[c]{@{}c@{}}$k$\end{tabular} & 4& 8& 16 & 32& 64& 128 & 256\\
         & Mult Depth &  2& 3& 4& 5 & 6 & 7 & 8 \\
         & \begin{tabular}[c]{@{}c@{}}$m$\end{tabular}& 11 & 19 & 36 & 68& 132 & 261 & 517 \\
         & $N=8192$ & 0.11 & 0.27 & 0.55 & - & - & - & - \\
         & $N=16384$& 0.49 & 1.1 & 2.4 & 5.0 & 10 & 21 & 41\\ \bottomrule \toprule
        \multirow{6}{*}{\begin{tabular}[c]{@{}c@{}}Plain\\Constant-\\weight\\$k=\frac{1}{4}\log_2 n$\end{tabular}} & \begin{tabular}[c]{@{}c@{}}$k$\end{tabular} & 2& 4& 8& 16& 32& 64& 128 \\
         & Mult Depth & 1& 2& 3& 4 & 5 & 6 & 7 \\
         & \begin{tabular}[c]{@{}c@{}}$m$\end{tabular}& 24 & 37 & 64 & 117 & 221 & 427 & 838\\
         & $N=4096$ & 0.01 & - & - & - & - & - & - \\
         & $N=8192$ & 0.04 & 0.12 & 0.25 & 0.49 & - & - & -\\
         & $N=16384$& 0.17 & 0.48 & 1.1 & 2.4 & 5.0 & 10 & 21 \\ \bottomrule \toprule
        \multirow{6}{*}{\begin{tabular}[c]{@{}c@{}}Plain\\Constant-\\weight\\$k=\frac{1}{8}\log_2 n$\end{tabular}} & \begin{tabular}[c]{@{}c@{}}$k$\end{tabular} & 1& 2& 4& 8 & 16& 32& 64\\
         & Mult Depth & 0& 1& 2& 3 & 4 & 5 & 6 \\
         & \begin{tabular}[c]{@{}c@{}}$m$\end{tabular}& 256& 363& 569& 968 & 1749& 3290& 6349\\
         & $N=4096$ & \textbf{0.0001} & \textbf{0.008}& - & - & - & - & - \\
         & $N=8192$ & 0.0004 & 0.038 & \textbf{0.10}& \textbf{0.25} & \textbf{0.54} & - & - \\
         & $N=16384$& 0.002 & 0.17 & 0.5 & 1.1 & 2.4 & \textbf{5.0}& \textbf{10}\\ \bottomrule
        \end{tabular}
    }
\end{table}

The constant-weight plain operator consistently outperforms the folklore operator in terms of running time. The advantage is greater when smaller homomorphic encryption parameters (namely $N$) can be used. This is possible due to a smaller multiplicative depth compared to the folklore circuit in cases where $k < \log_2 n$. However, the advantage exists even when using the same homomorphic encryption parameters. This can be attributed to fewer multiplications in the circuit ($k$ compared to $\ell$) when a small Hamming weight is used.

Faster runtimes for the plain constant-weight circuit come at the cost of higher memory usage during the protocol. The memory usage depends on the code length, also specified in the table. Depending on the application, the code length determines the communication complexity if operands are communicated over the network.

Parallelization offers roughly up to 10$\times$ speedup for both circuits and there is no noticeable difference in the advantage that parallel implementation offers for both circuits. \Cref{tab:eq-parallel} in \Cref{sec:parallel-plain} shows the runtimes of plain operators when parallelized.

\subsection{Arithmetic Operators}
\Cref{tab:arith-eq} summarizes the results of our experiments for arithmetic equality operators. Similar to before, each column reports the runtimes for a specific domain size. We report the results for the arithmetic constant-weight operator in four categories based on the relationship between $\log_2 n$ and $k$.

\begin{table}[!ht]
\centering
\caption{Runtimes for arithmetic equality operators in seconds. Dashes indicate cases where the ciphertext was undecryptable due to homomorphic noise. $k$ and $m$ denote the Hamming weight and constant-weight code length, respectively. Bold numbers indicate the best runtimes for each $n$.}
\label{tab:arith-eq}
\resizebox{\columnwidth}{!}{
\begin{tabular}{c|cccccccc} \toprule
     & $n$ & $2^{8}$& $2^{16}$ & $2^{32}$ & $2^{64}$ & $2^{128}$ & $2^{256}$ & $2^{512}$ \\ \bottomrule \toprule
    \multirow{4}{*}{\begin{tabular}[c]{@{}c@{}}Arithmetic\\Folklore\end{tabular}} & $\ell$ & 8& 16 & 32 & 64& 128 & 256 & 512 \\ 
     & Mult Depth & 4 & 5 & 6 & 7 & 8 & 9 & 10\\
     & $N=8192$ & \textbf{0.49} & - & - & - & - & - & - \\
     & $N=16384$ & 2.2 & 4.6 & 9.2 & 19 & 37 & 74 & 149 \\ \bottomrule \toprule
    \multirow{5}{*}{\begin{tabular}[c]{@{}c@{}}Arithmetic\\Constant-\\weight\\$k=\log_2 n$\end{tabular}} & $k$ & 8 & 16 & 32 & 64 & 128 & 256 & 512 \\
     & Mult Depth & 4 & 5 & 6 & 7 & 8 & 9 & 10 \\
     & $m$ & 12 & 22 & 43 & 85 & 168 & 334 & 665\\
     & $N=8192$ & 0.692 & - & - & - & - & - & - \\
     & $N=16384$ & 3.0 & 6.0 & 12 & 23 & 47 & 93 & 186\\ \bottomrule \toprule
    \multirow{5}{*}{\begin{tabular}[c]{@{}c@{}}Arithmetic\\Constant-\\weight\\$k=\frac{1}{2}\log_2 n$\end{tabular}} & $k$ & 4 & 8 & 16 & 32 & 64 & 128 & 256\\
     & Mult Depth& 3 & 4 & 5 & 6 & 7 & 8 & 9 \\
     & $m$ & 11 & 19 & 36 & 68 & 132 & 261 & 517\\
     & $N=8192$ & 0.53 & - & - & - & - & - & - \\
     & $N=16384$ & 2.2 & 4.3 & \textbf{8.2} & \textbf{16} & \textbf{31} & \textbf{63} & \textbf{123}\\ \bottomrule \toprule
    \multirow{5}{*}{\begin{tabular}[c]{@{}c@{}}Arithmetic\\Constant-\\weight\\$k=\frac{1}{4}\log_2 n$\end{tabular}} & $k$ & 2 & 4 & 8 & 16 & 32 & 64 & 128\\
     & Mult Depth & 2 & 3 & 4 & 5 & 6 & 7 & 8\\
     & $m$ & 24 & 37 & 64 & 117 & 221 & 427 & 838 \\
     & $N=8192$ & 0.85 & \textbf{1.3} & - & - & - & - & - \\
     & $N=16384$ & 4.3 & 6.4 & 11 & 21 & 40 & 78 & 154\\ \bottomrule \toprule
    \multirow{6}{*}{\begin{tabular}[c]{@{}c@{}}Arithmetic\\Constant-\\weight\\$k=\frac{1}{8}\log_2 n$\end{tabular}} & $k$ & 1 & 2 & 4 & 8 & 16 & 32 & 64 \\
     & Mult Depth & 1 & 2 & 3 & 4 & 5 & 6 & 7 \\
     & $m$ & 256 & 363 & 569 & 968 & 1749 & 3290 & 6349 \\
     & $N=4096$ & 2.0 & - & - & - & - & - & -\\
     & $N=8192$ & 8.4 & 12 & 19 & - & - & - & - \\
     & $N=16384$ & 41 & 58 & 91 & 156 & 282 & 533 & 1064 \\  \bottomrule
\end{tabular}
}
\end{table}

Unlike the plain operators, the constant-weight arithmetic operator is not always faster than the equivalent folklore arithmetic equality circuit, or the advantage is marginal. This is due to the large number of homomorphic multiplications that are required in a constant-weight arithmetic circuit ($m+k$ compared to $\ell$). Specifically, when the constant-weight code length, $m$, is large due to a small Hamming weight, $k$, the number of multiplications can be very high compared to the folklore.
However, in some cases, the smaller Hamming weight results in a lower multiplicative depth, which in turn allows the use of smaller homomorphic encryption parameters. For example for $n=2^{16}$, the constant-weight operator with $k=4$ using $N=8192$ is about 4 times faster than the folklore using $N=16384$. The amortized cost is also about 2 times faster.

Similar to the plain equality operators, high memory usage is also an issue with the arithmetic constant-weight equality operator and it requires much more memory than the equivalent folklore operator.

The effect of the parallelization is however substantially different between folklore and constant-weight operators.
\Cref{fig:speedup-arith-eq} shows the speedup for each of the five categories in \Cref{tab:arith-eq}. The folklore circuit runs at most 2 times faster with parallelization, whereas the constant-weight circuit has more than a 10$\times$ speedup in some cases. The speedup is larger as the domain size grows. The speedup is mainly due to the $m$ homomorphic multiplications that can be done in parallel. With parallelization, the arithmetic constant-weight operator outperforms the arithmetic folklore operators for all domain sizes.
\Cref{tab:eq-parallel} in \Cref{sec:parallel-plain} shows the runtimes of arithmetic operators when run in parallel.

\begin{figure}[!ht]
\centering
\resizebox{\columnwidth}{!}{
    \begin{tikzpicture}
    \begin{axis}[
        width=\textwidth,
        height=0.45\textwidth,
        xlabel={\Large Domain Bit-length},
        ylabel={\Large Speedup},
        xtick={16,64,128,256,512},
        table/col sep=comma,
        legend style={at={(0.95,0.6)}}
      ]
        \addplot[color=blue, mark=square*] table [y expr=\thisrow{fl}/\thisrow{fl-p} , x=domain-bit-length]{data/arith-parallel.csv};
        \addlegendentry{Folklore}

        \addplot[color=red, mark=square*] table [y expr=\thisrow{cw1}/\thisrow{cw1-p} , x=domain-bit-length]{data/arith-parallel.csv};
        \addlegendentry{Constant-weight ($k=\log_2 n$)}
        
        \addplot[color=black!60!green, mark=square*] table [y expr=\thisrow{cw2}/\thisrow{cw2-p} , x=domain-bit-length]{data/arith-parallel.csv};
        \addlegendentry{Constant-weight ($k=\frac{1}{2}\log_2 n$)}
        
        \addplot[color=purple, mark=square*] table [y expr=\thisrow{cw4}/\thisrow{cw4-p} , x=domain-bit-length]{data/arith-parallel.csv};
        \addlegendentry{Constant-weight ($k=\frac{1}{4}\log_2 n$)}

        \addplot[color=orange, mark=square*] table [y expr=\thisrow{cw8}/\thisrow{cw8-p} , x=domain-bit-length]{data/arith-parallel.csv};
        \addlegendentry{Constant-weight ($k=\frac{1}{8}\log_2 n$)}
        
    \end{axis}
    \end{tikzpicture}
    }
    \caption{Speedup using parallelization when evaluating arithmetic equality operators}
    \label{fig:speedup-arith-eq}
\end{figure}
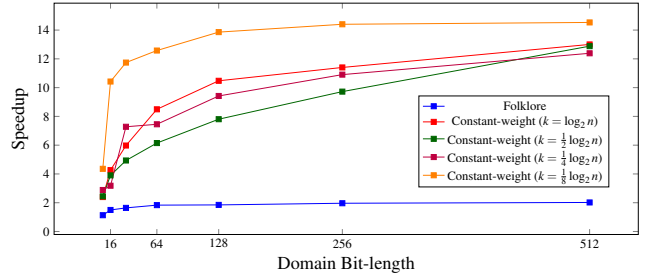

\section{Evaluation of PIR for Large Payloads}

In this section, we evaluate PIR protocols based on runtime and communication cost.
Specifically, we compare PIR using the folklore equality operator (which we call folklore PIR), Constant-weight PIR, SealPIR~\cite{Ali2019CommunicationComputationTI}, and  MulPIR~\cite{Ali2019CommunicationComputationTI}.

Folklore PIR refers to a PIR protocol using the same architecture as constant-weight PIR, but replacing the equality operator with the plain folklore operator. Indices are encoded using the logarithmic binary encoding in this protocol.

SealPIR and MulPIR are based on the approach where the selection vector is communicated to the server, whereas folklore PIR and constant-weight PIR make use of equality operators. We aim to compare the two general methods (selection vectors vs.\ equality circuits) while also evaluating constant-weight PIR against folklore PIR.

\paragraph{Unary Approach.} Note that SealPIR and MulPIR with $d=1$ are equivalent to constant-weight PIR when $k=1$. Hence, we refer to this configuration as the \emph{unary} approach. We report the runtimes of this approach in \Cref{sec:other-approach} as a baseline. To give a summary, the unary approach has a smaller runtime compared to the other approaches described in this paper and has a reasonable upload cost for small, packed databases. However, the upload cost is on the order of the size of the domain which is impractical for large domains. Hence, we exclude it from the comparison in this section. Specifically, we compare approaches that have a multiplicative depth of at least one. This includes SealPIR and MulPIR with $d\geq 2$, and constant-weight PIR with $k\geq 2$. This setup is particularly useful for large domains, which we explain in \Cref{sec:pir-sparse}.

\paragraph{Implementation Details.} Constant-weight PIR is implemented as described in \Cref{pir-constant-weight-code}. We also implement folklore PIR using the same architecture and consisting of the same stages described in \Cref{pir-constant-weight-code}. However, we use a logarithmic binary encoding for indices and the equality operator is replaced with a plain folklore equality operator per definition in \Cref{folklore-plain-eq}.

Our implementation of constant-weight PIR and folklore PIR is open-source and available on Github\footnote{\url{https://github.com/RasoulAM/constant-weight-pir}}. We implement all protocols using C++ and SEAL (version 3.7)\footnote{\url{https://github.com/microsoft/SEAL}} as the homomorphic encryption library. For SealPIR and MulPIR, we use the implementation by the OpenMined community\footnote{\url{https://github.com/OpenMined/PIR}}.
We select homomorphic encryption parameters such that it satisfies 128-bit security. Specifically, we use $N\in\{4096,8192,16384\}$ and the default coefficient modulus in SEAL for 128-bit security. Each protocol is run with the smallest parameter set which produces decryptable results. Specifically, SealPIR uses $N=4096$, whereas MulPIR, folklore PIR, and constant-weight PIR require $N\geq8192$.

We run all experiments on an Intel Xeon E5-4640 @ 2.40GHz server running Ubuntu 16.04. 

\paragraph{Experimental Setup.}
Index PIR implies that all database rows are full (in contrast to keyword PIR where some keywords do not correspond to any payload data in the database).
We are interested in the case where the payload is large.
Previous work on PIR, specifically information theoretic PIR, has examined PIR when the payload grows arbitrarily large~\cite{8006859,7997393}. There also exist applications of single-server PIR such that the payload can be arbitrarily large~\cite{Mayberry2014EfficientPF}.

Note that the size of the payload data is a multiple of the plaintext size and plaintext sizes depend on the homomorphic encryption parameters used in each approach. Hence, we run experiments for a payload data of one plaintext and extrapolate the results for larger payload data sizes.

\paragraph{Results.} \Cref{tab:compare-pir-params} lists the properties of the four aforementioned protocols.

\begin{table}[!ht]
    \centering
    \caption{Parameters for PIR protocols when $|S(\mathbb{ID})|=|\mathbb{ID}|=n$ and the payload data is $s$ plaintexts.}
    \label{tab:compare-pir-params}
\resizebox{\columnwidth}{!}{
    \begin{tabular}{ccccc}
    \toprule
        \textbf{Method} & \textbf{Mult Depth} & \begin{tabular}[c]{@{}c@{}}\textbf{Query}\\\textbf{Bit-length}\end{tabular} & \begin{tabular}[c]{@{}c@{}} \textbf{\# of Operations}\\\textbf{(Excluding Expansion)} \end{tabular} & \begin{tabular}[c]{@{}c@{}}\textbf{Download}\\\textbf{Cost (in cts)}\end{tabular}\\ 
    \midrule
        SealPIR & $d-1$ & $d \ceil{\sqrt[d]{\dbsize}\ }$ & $(\sum_{i=0}^{d-1} \dbsize^{\frac{d-i}{d}}F^i \cdot \texttt{PM}) \cdot s$ & $F^{d-1}s$ \\ \midrule
        MulPIR & $d-1$ & $d \ceil{\sqrt[d]{\dbsize}\ }$ & $(\dbsize \cdot \texttt{PM} + \sum_{i=1}^{d-1} \dbsize^{\frac{d-i}{d}} \cdot \texttt{M} ) \cdot s$ & $s$\\ \midrule
        Fl. PIR & $\ceil{\log_2 \ceil{\log_2 \dbsize}}$ & $\ceil{\log_2 \dbsize}$ & $ \dbsize \ceil{\log_2 n} \cdot \texttt{M} + \dbsize s \cdot \texttt{PM} $ & $s$ \\ \midrule
        Cw PIR & $\ceil{\log k}$ & $O\left(\sqrt[k]{k!\dbsize} + k\right)$ & $ \dbsize k \cdot \texttt{M} + \dbsize s \cdot \texttt{PM} $ & $s$ \\
    \bottomrule
    \end{tabular}
}
\end{table}

First, we compare protocols using equality operators. \Cref{tab:packed-pir} compares folklore PIR and constant-weight PIR.
This table shows folklore PIR is much slower than constant-weight PIR. At $n=512$, the parameters of the homomorphic cryptosystem must be increased from $N=8192$ to $N=16384$ to produce valid, decryptable results. Larger parameters increase the runtime drastically. Consequently, constant-weight PIR is the first practical PIR protocol using equality operators. \Cref{tab:packed-pir} includes runtimes for constant-weight PIR when run in parallel to demonstrate practicality.

\begin{table}[!ht]
    \centering
    \caption{Runtime of PIR protocols using equality operators for a response size of one plaintext. Runtimes are in seconds and an average of 10 runs. *This parameter set did not produce a decryptable result.}
    \label{tab:packed-pir}
    \resizebox{\columnwidth}{!}{%
    \begin{tabular}{@{\extracolsep{4pt}}ccccccc@{}}
    \toprule
    &&& \multicolumn{4}{c}{\textbf{Time (s)}}\\ \cline{4-7}
    \begin{tabular}[c]{@{}c@{}}\textbf{\# of}\\\textbf{Rows}\end{tabular} &
    \begin{tabular}[c]{@{}c@{}}\textbf{DB Size}\\\textbf{(MB)}\end{tabular} &
    \begin{tabular}[c]{@{}c@{}}\textbf{Code}\\\textbf{Length}\end{tabular} & \begin{tabular}[c]{@{}c@{}}\textbf{Expansion}\end{tabular} & \begin{tabular}[c]{@{}c@{}}\textbf{Sel. Vec.}\\\textbf{Calculation}\end{tabular} & \begin{tabular}[c]{@{}c@{}}\textbf{Inner}\\\textbf{Product}\end{tabular} & \begin{tabular}[c]{@{}c@{}}\textbf{Total Server}\end{tabular}\\
    \bottomrule
    \toprule
    \multicolumn{7}{c}{Folklore, $N=8192$ (Query = 216 KB, Response = 106 KB)} \\
    \midrule
    256 &	8 &	5 &	0.06 &	58 &	0.9 & 60 \\
    512$^{*}$ & 9 &	10 & 0.1 & 130 & 1.7 & 130 \\
    \bottomrule
    \toprule
    \multicolumn{7}{c}{Folklore, $N=16384$ (Query = 913 KB, Response = 224 KB)} \\
    \midrule
    512 &	21 &	9 &	0.8 &	650 &	7.4 &	660 \\
    1024 &	42 &	10 &	0.8 &	1500 &	14 &	1500 \\
    2048 &	84 &	11 &	0.8 &	3300 &	29 &	3300 \\
    4096 &	170 &	12 &	0.8 &	7200 &	56 &	7200 \\
    8192 &	340 &	13 &	0.8 &	16000 &	120 &	16000 \\
    16384 &	670 &	14 &	0.8 &	35000 &	250 &	35000 \\
    \bottomrule
    \toprule
    \multicolumn{7}{c}{Constant-weight $k=2, N=8192$, (Query = 216 KB, Response = 106 KB)} \\
    \midrule
    \multicolumn{7}{c}{Single-thread} \\
    \midrule
    256 & 5.2 &	24 &	0.3 &	8.3 &	0.9 &	9.7 \\
    512 & 10 &	33 &	0.5 &	17 &	1.7 &	19 \\
    1024 & 21 &	46 &	0.5 &	33 &	3.5 &	38 \\
    2048 & 42 &	65 &	1 &	67 &	6.9 &	75 \\
    4096 & 84 &	92 &	1 &	130 &	13 &	150 \\
    8192 & 170 &	129 &	2 &	270 &	27 &	300 \\
    16384 & 340 &	182 &	2 &	540 &	55 &	600 \\
    32768 & 670 &	257 &	5 &	1100 &	110 &	1200 \\
    65536 & 1300 &	363 &	5 &	2300 &	230 &	2500 \\

    \bottomrule
    \multicolumn{7}{c}{Parallelized} \\
    \midrule
    256 & 5.2 &	24 &	0.1 &	0.5 &	0.3 &	1.1 \\
    512 & 10 &	33 &	0.1 &	0.7 &	0.5 &	1.6 \\
    1024 & 21 &	46 &	0.2 &	1.4 &	1.2 &	2.9 \\
    2048 & 42 &	65 &	0.2 &	2.9 &	2.4 &	5.6 \\
    4096 & 84 &	92 &	0.3 &	5.7 &	4.6 &	11 \\
    8192 & 170 &	129 &	0.3 &	11 &	9.2 &	21 \\
    16384 & 340 &	182 &	0.4 &	22 &	18 &	41 \\
    32768 & 670 &	257 &	0.6 &	44 &	34 &	79 \\
    65536 & 1300 &	363 &	0.7 &	87 &	70 &	160 \\
    131072 & 2700 &	513 &	1.2 &	170 &	140 &	320 \\
    262144 & 5400 &	725 &	1.4 &	340 &	290 &	640 \\
    \bottomrule
    \end{tabular}%
    }
    \end{table}

Another observation from \Cref{tab:compare-pir-params} is that the download cost of SealPIR is larger compared to the other protocols.
\Cref{tab:comm-cost} shows the upload, download, and total communication cost for a payload data of one plaintext.
For larger payloads, the high download cost of SealPIR is multiplied by the number of plaintexts in the payload data. Hence, constant-weight PIR and MulPIR have a lower communication cost for large payload data and streaming data.

\begin{table}[!ht]
    \caption{Upload, download, and total communication cost for payload data equal to one plaintext.}
    \label{tab:comm-cost}
    \centering
    \resizebox{\columnwidth}{!}{
        \begin{tabular}{cccc}
            \toprule
             & \textbf{Upload Cost} & \textbf{Download Cost} & \textbf{Total Comm.} \\
             \midrule
            SealPIR & 61.4 KB & 307 KB & 368.4 KB\\
            MulPIR & 122 KB & 119 KB & 241 KB\\
            Constant-weight PIR & 216 KB & 106 KB & 322 KB\\
            \bottomrule
        \end{tabular}
    }
\end{table}

Next, we analyze the effect of larger payload data on the runtime of the protocols.
We focus our attention to comparing MulPIR and constant-weight PIR as they have similar communication complexity.
Runtimes for SealPIR are given in \Cref{sec:other-approach}.
MulPIR (and SealPIR) must repeat the server computation (except the expansion step) for each plaintext in the payload. This applies to other approaches using selection vectors as well.
In constant-weight PIR, only the inner product step must be repeated for each plaintext in the payload.

To show this effect, we perform PIR over a database with $n=16384$ rows with various, large, payload sizes.
\Cref{fig:pir-large-runtimes} shows the runtime of the constant-weight PIR (with $k=2$) as a function the payload size. The implementation of MulPIR by OpenMined does not support large payloads, so we provide a lower bound of the runtime of MulPIR (with $d=2$) based on the server time for a payload of one plaintext.

As seen in \Cref{fig:pir-large-runtimes}, the runtime of constant-weight PIR is higher than MulPIR for a small payload but grows at a slower rate. Eventually, constant-weight PIR outperforms MulPIR when the payload size exceeds 268 KB. This corresponds to a database size of about 4.3 GB.

\begin{figure}[!ht]
\centering
\resizebox{\columnwidth}{!}{
    \begin{tikzpicture}
    \begin{axis}[
        width=\textwidth,
        height=0.45\textwidth,
        xlabel=Payload Data (KB),
        ylabel=Runtime (in seconds),
        table/col sep=comma,
        legend style={at={(0.46,0.95)}}
      ]
        \addplot[color=blue, mark=square*] table [y=mulpir, x=response-size-kbs]{data/large-pir.csv};
        \addlegendentry{MulPIR, $d=2$ (lower bound estimate)}
        \addplot[color=red, mark=square*] table [y=constant-weight-single-thread, x=response-size-kbs]{data/large-pir.csv};
        \addlegendentry{Constant-weight, $k=2$}
        \addplot[color=orange, mark=square*] table [y=constant-weight parallel, x=response-size-kbs]{data/large-pir.csv};
        \addlegendentry{Constant-weight, $k=2$ (parallelized)}
    \end{axis}
    \end{tikzpicture}
    }
    \caption{Runtime of constant-weight PIR and an estimation of the runtime of MulPIR for large payloads.}
    \label{fig:pir-large-runtimes}
\end{figure}
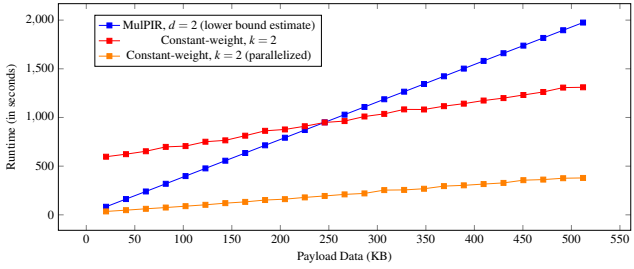

To summarize, constant-weight PIR outperforms folklore PIR in all cases. It also has a smaller communication complexity and lower runtime compared to SealPIR and MulPIR, respectively, when the payload size increases.

\section{Analysis of Constant-weight Keyword PIR}
\label{sec:constant-weight-keyword-pir}

In this section, we show how constant-weight PIR performs over a sparse database.
We first motivate our approach to keyword PIR by discussing the private file retrieval as an application. 
We also argue about the modifications required for MulPIR and SealPIR to allow for keyword PIR without reducing the problem to index PIR.

\paragraph{Keyword PIR for Private File Retrieval.}

Private file retrieval is a setup, similar to that of PIR, where the items that are retrieved are large, e.g. files or documents. Private retrieval of large items has been discussed in the literature~\cite{chor1995private} which differs from the case where only a single bit is retrieved. Solutions based on ORAM come at a high communication cost~\cite{Mayberry2014EfficientPF, Shi2011ObliviousRW}. Specifically, the response size is $O(\ell \log n)$ in the worst case for retrieving an item of size $\ell$ amongst $n$ elements.

PIR is a suitable solution for this problem given that it can achieve asymptotically optimal communication complexity~\cite{Mayberry2014EfficientPF}. Keyword PIR provides the additional feature of retrieving documents by identifiers instead of an index in a directory.

Constant-weight keyword PIR is a practical keyword PIR protocol that can be used for private file retrieval. Moreover, constant-weight PIR is performed without the use of a hash-table to store the identifiers or multiple rounds of communication, which is in contrast to the existing approaches for keyword PIR~\cite{chor1997private, Ali2019CommunicationComputationTI}. This is useful in the presence of many users with unreliable connections and low bandwidth. Particularly in solutions that store the identifiers using a hash-table, updates to the database may require a change in the parameters of the hash function to avoid collisions. An additional round of communication is required for each query to communicate new hash function parameters to the user.

\Cref{tab:pir-apps} shows example runtimes of constant-weight PIR used to retrieve files from a database with large items. In the next subsection, we provide a finer analysis of the cost of constant-weight keyword PIR.
The experiments in \Cref{tab:pir-apps} were performed on an Intel(R) Xeon(R) CPU E7-8860 v4 @ 2.20GHz running Ubuntu 20.04. The experiments are parallelized over 144 cores to achieve the best possible performance. The results are only to demonstrate practicality and can easily be enhanced using hardware accelerators (GPUs) or accelerators for the homomorphic encryption libraries such as HEXL~\cite{boemer2021intel}.

\begin{table}[!ht]
    \caption{Server runtimes for Constant-weight PIR of large payloads.}
    \label{tab:pir-apps}
    \centering
\resizebox{\columnwidth}{!}{
    \begin{tabular}{ccccc}
    \toprule
        \begin{tabular}{c} \textbf{Keyword}\\\textbf{Bitlength} \end{tabular} &
        \begin{tabular}{c} \textbf{Number of}\\\textbf{Items ($n$)} \end{tabular} &
         \begin{tabular}{c}\textbf{Database}\\\textbf{Size (GB)}\end{tabular} &
         \begin{tabular}{c}\textbf{Item}\\\textbf{Size (MB)}\end{tabular} &
         \begin{tabular}{c}\textbf{Server}\\\textbf{Time (s)}\end{tabular} \\ 
    \midrule
        \multirow{8}{*}{16} & \multirow{4}{*}{1000} & 1.3 &	1.3 &	51.9 \\
        & & 2.6 &	2.6 &	107 \\
        & & 5.2 &	5.2 &	200 \\
        & & 10.0 &	10.0 &	369 \\ \cline{2-5}
        & \multirow{4}{*}{10000} & 13.0 &	1.3 &	508 \\
        & & 26.0 &	2.6 &	878 \\
        & & 52.0 &	5.2 &	1670 \\
        & & 100.0 &	10.0 &	3250 \\
    \midrule
        \multirow{8}{*}{32} & \multirow{4}{*}{1000} & 1.3 &	1.3 &	59 \\
        & & 2.6 &	2.6 &	111 \\
        & & 5.2 &	5.2 &	212 \\
        & & 10.0 &	10.0 &	354 \\
        \cline{2-5}
         & \multirow{4}{*}{10000} & 13.0 &	1.3 &	506 \\
         & & 26.0 &	2.6 &	869 \\
         & & 52.0 &	5.2 &	1700 \\
         & & 100.0 &	10.0 &	3180 \\
    \midrule
        \multirow{8}{*}{48} & \multirow{4}{*}{1000} &  1.3 & 1.3 &	71.3 \\
         & & 2.6 &	2.6 &	129 \\
         & & 5.2 &	5.2 &	208 \\
         & & 10.0 &	10.0 &	380 \\
         \cline{2-5}
         & \multirow{4}{*}{10000} & 13.0 &	1.3 &	541 \\
         & & 26.0 &	2.6 &	922 \\
         & & 52.0 &	5.2 &	1720 \\
         & & 100.0 &	10.0 &	3300 \\
         
         \bottomrule
    \end{tabular}
}
\end{table}

\paragraph{Analysis of PIR for Sparse Domains.}
\Cref{tab:compare-sparse-pir-params} shows the properties of the PIR protocols, adjusted for when the database is sparse. $n$ and $|S|$ denote the number of rows in the database and the size of the domain from which the query is selected, respectively.

\begin{table}[!ht]
    \caption{Properties of SealPIR, MulPIR, and constant-weight PIR when used for keyword PIR.}
    \label{tab:compare-sparse-pir-params}
    \centering
\resizebox{\columnwidth}{!}{
    \begin{tabular}{ccccc}
    \toprule
        \textbf{Method} & 
        \begin{tabular}{c} \textbf{Mult}\\\textbf{Depth} \end{tabular}
         & \begin{tabular}{c}\textbf{Query}\\\textbf{Bit-length}\end{tabular} & \begin{tabular}{c}\textbf{\# of Operations}\\\textbf{(Excluding Expansion)} \end{tabular} & \begin{tabular}[c]{@{}c@{}}\textbf{Download}\\\textbf{Cost (in cts)}\end{tabular}\\ 
    \midrule
        SealPIR & $d-1$ & $d \ceil{\sqrt[d]{|S|}\ }$ & $\dbsize \cdot \texttt{PM} + \sum_{i=1}^{d-1} {|S|}^{\frac{d-i}{d}}F^i \cdot \texttt{PM}$ & $F^{d-1}$ \\ \midrule
        MulPIR & $d-1$ & $d \ceil{\sqrt[d]{|S|}\ }$ & $\dbsize \cdot \texttt{PM} + \sum_{i=1}^{d-1} {|S|}^{\frac{d-i}{d}} \cdot \texttt{M} $ & $1$\\ \midrule
        CwPIR & $\ceil{\log k}$ & $O\left(\sqrt[k]{k!|S|} + k\right)$ & $ \dbsize k \cdot \texttt{M} + \dbsize \cdot \texttt{PM} $ & $1$ \\
    \bottomrule
    \end{tabular}
}
\end{table}

We argue that constant-weight PIR is minimally affected by sparsity in the database and it is a suitable solution for keyword PIR. \Cref{tab:compare-sparse-pir-params} supports this argument, as the number of operations (excluding expansion) for constant-weight PIR does not depend on the size of the domain. We exclude folklore PIR from this section entirely since it follows the same approach as constant-weight PIR and is strictly slower.

\Cref{tab:compare-sparse-pir-params} also shows the query bit-length of each method.
The query bit-length determines the communication cost in the protocol and also affects the computation cost, specifically the expansion step. 
This query bit-length is affected by the domain size and is equal to the length of the constant-weight code that is used in constant-weight PIR. SealPIR and MulPIR use the same type of encoding for PIR queries which essentially calculates the position of the desired row of the database when restructured into a $d$-dimensional table. We denote this as a \emph{dimension-wise} encoding in this section.

\begin{table}[!ht]
    \caption[Bit-length of the query in different protocols]{Bit-length of the query in different protocols}
    \label{tab:encoding-size}
    \centering
    \resizebox{\columnwidth}{!}{
        \begin{tabular}{c|c|c|c|c|c|c|c}\toprule
        \multirow{3}{*}{\textbf{\begin{tabular}[c]{@{}c@{}}Domain\\ Bit-length\\($\log_2 |S|$)\end{tabular}}}  & \multicolumn{4}{c|}{\textbf{Constant-weight code size}}  & \multicolumn{3}{c}{\textbf{Dimension-wise}}\\
         & \multicolumn{1}{c|}{\textbf{depth=0}}  & \multicolumn{1}{c|}{\textbf{depth=1}}  & \multicolumn{2}{c|}{\textbf{depth=2}}  & \multicolumn{1}{c|}{\textbf{depth=0}}  & \multicolumn{1}{c|}{\textbf{depth=1}}  & \multicolumn{1}{c}{\textbf{depth=2}} \\
          & \multicolumn{1}{c|}{\textbf{k=1}} & \multicolumn{1}{c|}{\textbf{k=2}}  & \multicolumn{1}{c|}{\textbf{k=3}} & \multicolumn{1}{c|}{\textbf{k=4}}  & \multicolumn{1}{c|}{\textbf{d=1}}  & \multicolumn{1}{c|}{\textbf{d=2}}  & \multicolumn{1}{c}{\textbf{d=3}} \\ \midrule
        4   & 16  & 7 & 6  & 7 & 16  & 8 & 9\\
        6   & 64  & 12  & 9  & 8 & 64  & 16  & 12 \\
        8   & 256 & 24  & 13 & 11  & 256 & 32  & 21 \\
        10  & 1024  & 46  & 20 & 15  & 1024  & 64  & 33 \\
        12  & 4096  & 92  & 31 & 20  & 4096  & 128 & 48 \\
        14  & 16384 & 182 & 48 & 27  & 16384 & 256 & 78 \\
        16  & 65536 & 363 & 75 & 37  & 65536 & 512 & 123\\
        18  & 262144  & 725 & 118  & 52  & 262144  & 1024  & 192\\
        20  & - & 1449  & 186  & 73  & - & 2048  & 306\\
        22  & - & 2897  & 295  & 102 & - & 4096  & 486\\
        24  & - & 5794  & 467  & 144 & - & 8192  & 768\\
        26  & - & 11586 & 740  & 202 & - & 16384 & 1221 \\
        28  & - & 23171 & 1174 & 285 & - & 32768 & 1938 \\
        30  & - & 46342 & 1862 & 403 & - & 65536 & 3072 \\
        32  & - & 92683 & 2955 & 569 & - & 131072  & 4878 \\
        34  & - & 185365  & 4690 & 803 & - & 262144  & 7743 \\
        36  & - & 370729  & 7444 & 1135  & - & 524288  & 12288\\
        38  & - & 741456  & 11816  & 1605  & - & 1048576 & 19506\\
        40  & - & - & 18756  & 2268  & - & - & 30966\\
        42  & - & - & 29773  & 3207  & - & - & 49152\\
        44  & - & - & 47261  & 4535  & - & - & 78024\\
        46  & - & - & 75021  & 6413  & - & - & 123858 \\
        48  & - & - & 119088 & 9068  & - & - & 196608 \\ \bottomrule
        \end{tabular}
    }
\end{table}

\Cref{tab:encoding-size} shows the number of bits required to represent a query using a constant-weight codeword and a dimension-wise encoding as a function the domain bit-length, $\log_2 |S|$.
The constant-weight code length is shown for four different values of $k$, the Hamming weight.
In the last three columns, we derive the bit-length of the dimension-wise encoding.
The depth refers to the multiplicative depth in a PIR protocol using the set of parameters in that column.

There are multiple observations from this table. Firstly, larger $k$ or $d$ (and higher multiplicative depth in turn) drastically reduces the bit-length of the query. Given this observation, a fair comparison between the constant-weight code and dimension-wise encoding is comparing those with the same multiplicative depth since the multiplicative depth directly impacts the performance. For the same multiplicative depth, the constant-weight code is smaller than the dimension-wise encoding. \Cref{fig:encoding-size} visualizes this for even larger domain sizes and higher multiplicative depths. Note that the scale on the vertical axis is logarithmic and the gap between the size of the codes increases as the domain size increases and a larger multiplicative depth is used.

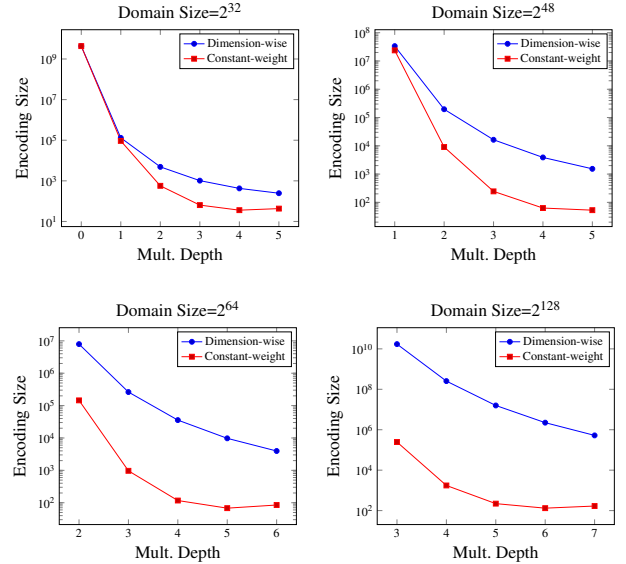
\begin{figure}[!ht]
\centering
    \subfloat{
        \begin{tikzpicture}[scale=0.46]
        \begin{axis}[
          xlabel={\Large Mult. Depth},
          ylabel={\Large Encoding Size},
          ymode=log,
          table/col sep=comma,
          title={\Large Domain Size=$2^{32}$}
          ]
            \addplot table [y=dimension-wise, x=depth]{data/encoding-size-len=32.csv};
            \addlegendentry{Dimension-wise}
            \addplot table [y=constant-weight, x=depth]{data/encoding-size-len=32.csv};
            \addlegendentry{Constant-weight}
        \end{axis}
        \end{tikzpicture}
    }~
    \subfloat{
        \begin{tikzpicture}[scale=0.46]
        \begin{axis}[
          xlabel={\Large Mult. Depth},
          ylabel={\Large Encoding Size},
          ymode=log,
          table/col sep=comma,
          title={\Large Domain Size=$2^{48}$}
          ]
            \addplot table [y=dimension-wise, x=depth]{data/encoding-size-len=48.csv};
            \addlegendentry{Dimension-wise}
            \addplot table [y=constant-weight, x=depth]{data/encoding-size-len=48.csv};
            \addlegendentry{Constant-weight}
        \end{axis}
        \end{tikzpicture}
    }
    
    \subfloat{
        \begin{tikzpicture}[scale=0.46]
        \begin{axis}[
          xlabel={\Large Mult. Depth},
          ylabel={\Large Encoding Size},
          ymode=log,
          table/col sep=comma,
          title={\Large Domain Size=$2^{64}$}
          ]
            \addplot table [y=dimension-wise, x=depth]{data/encoding-size-len=64.csv};
            \addlegendentry{Dimension-wise}
            \addplot table [y=constant-weight, x=depth]{data/encoding-size-len=64.csv};
            \addlegendentry{Constant-weight}
        \end{axis}
        \end{tikzpicture}
    }~
    \subfloat{
        \begin{tikzpicture}[scale=0.46]
        \begin{axis}[
          xlabel={\Large Mult. Depth},
          ylabel={\Large Encoding Size},
          ymode=log,
          table/col sep=comma,
          title={\Large Domain Size=$2^{128}$}
          ]
            \addplot table [y=dimension-wise, x=depth]{data/encoding-size-len=128.csv};
            \addlegendentry{Dimension-wise}
            \addplot table [y=constant-weight, x=depth]{data/encoding-size-len=128.csv};
            \addlegendentry{Constant-weight}
        \end{axis}
        \end{tikzpicture}
    }
    \caption{Encoding size as a function of multiplicative depth}
    \label{fig:encoding-size}
\end{figure}

The size of the query can also affect the server runtime in the protocol.
\Cref{fig:sparse-pir-runtime} shows the runtime of keyword PIR over a database of with $n=16384$ rows and payload size of one plaintext (roughly 20.1 KB) which corresponds to a database of about 330 MB.
We vary the domain size to examine the effect on the overall runtime, which is influenced by the query bit-length.

\begin{figure}[!ht]
\centering
    \begin{tikzpicture}
        \begin{axis}[
            width=\columnwidth,
            height=0.7\columnwidth,
            xlabel={\small Domain Bit-length ($\log_2 |S|$)},
            ylabel={\small Runtime (s)},
            table/col sep=comma,
            legend style={at={(0.30,0.95)}},
            ymin=0
          ]
            \addplot [name path = total2, color=blue, mark=*, mark size=0.8pt] table [y=time_server_total_k_2, x=log_domain_size]{data/sparse-pir.csv};
            \addlegendentry{$k=2$}
            \addplot [name path = total3, color={black!60!green}, mark=*, mark size=0.8pt] table [y=time_server_total_k_3, x=log_domain_size]{data/sparse-pir.csv};
            \addlegendentry{$k=3$}
            \addplot [name path = total4, color=red, mark=*, mark size=0.8pt] table [y=time_server_total_k_4, x=log_domain_size]{data/sparse-pir.csv};
            \addlegendentry{$k=4$}        

            \addplot [name path = iter2, color=blue!10, mark=none] table [y=time_process_k_2, x=log_domain_size]{data/sparse-pir.csv};
            \addplot [name path = iter3, color={green!10}, mark=none] table [y=time_process_k_3, x=log_domain_size]{data/sparse-pir.csv};
            \addplot [name path = iter4, color=red!10, mark=none] table [y=time_process_k_4, x=log_domain_size]{data/sparse-pir.csv};

            \addplot [blue!10] fill between[of=total2 and iter2];
            \addplot [green!10] fill between[of=total3 and iter3];
            \addplot [red!10] fill between[of=total4 and iter4];

        \end{axis}
    \end{tikzpicture}
\caption{Total server time of constant-weight keyword PIR as a function of the domain size for three different Hamming weights. The shaded areas indicate the amount of time required for the expansion step.}
\label{fig:sparse-pir-runtime}
\end{figure}
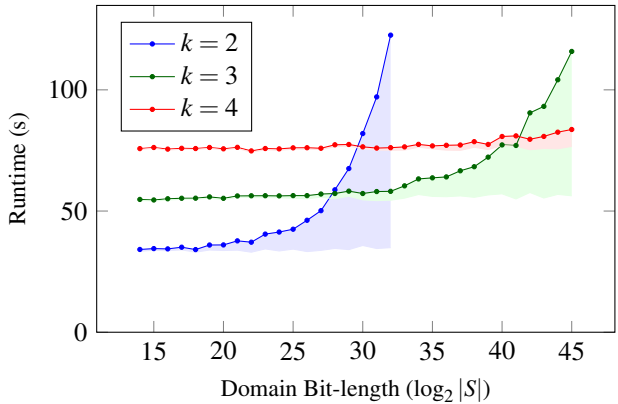

The runtime of the protocol consists of the expansion step, and the iteration step (which is the selection vector calculation and inner product combined). We report numbers for $k\in \{2,3,4\}$ since we know that $k=1$ produces an encoding size that is prohibitively large.
Each plot in \Cref{fig:sparse-pir-runtime} is for one value of $k$. The shaded area beneath each plot indicates the amount of time required for the expansion step.

Initially, for $\log_2 |S| \leq 27$, $k=2$ has the smallest server time. However, when $\log_2 |S|$ approaches $28$, the expansion time constitutes a significant portion of the server time and a switch to $k=3$ results in a smaller total server time.
Similarly, when $\log_2 |S|$ reaches 41, a switch to $k=4$ produces the best results.
Notice how the runtime excluding the expansion step does not change significantly for all values of $k$ and the time required for the expansion step eventually becomes the dominant factor when the domain size increases.

\section{Conclusion}
\label{conclusion}

In this work, we proposed equality operators for constant-weight codewords. We showed how these operators are up to 10 times faster than folklore equality operators. Furthermore, we proposed constant-weight PIR, a PIR protocol using equality operators which is an approach that was previously assumed to be impractical.
We showed how the communication and computation cost of constant-weight PIR grows at a slower rate compared to SealPIR and MulPIR, respectively. 
Furthermore, we showed how constant-weight PIR is extended to keyword PIR to be the first practical, single-round, single-server keyword PIR protocol. We provided a detailed analysis of effect of a large domain on the runtime of constant-weight keyword PIR and discussed how it can be used for applications such as private file retrieval.

\section*{Acknowledgements}
We would like to thank Ian Goldberg for his useful comments on an earlier version of this work. We also thank our reviewers for their comments and particularly our shepherd, Tancrède Lepoint, which provided helpful insights and suggestions to clarify our contributions.
This work benefited from the use of the CrySP RIPPLE Facility at the University of Waterloo.

\bibliographystyle{plain}
\bibliography{main}

\appendix
\section{Mappings to Constant-weight Codewords}
\label{sec:more-mappings}

In this section, we propose additional techniques to map elements to constant-weight codewords. As a reminder, the goal is for the mapping (and inverse mapping) procedure to be efficient and less expensive than storing an equivalence table.

\paragraph{Perfect Mapping.} The perfect mapping was described in \Cref{constructions}. Since the mapping is one-on-one, there also exists an inverse mapping which is described in \Cref{alg:inverse-mapping-perfect}. Similar to the mapping, the complexity of the inverse mapping procedure is $O(m+k)$.

\begin{algorithm}[!ht]
	 \caption[]{\textsc{Inverse Perfect Mapping}}
	 \label{alg:inverse-mapping-perfect}
	 \begin{flushleft}
		 \textbf{Input:} $y \in CW(m,k)$
	 \end{flushleft}
	 	\vspace{-3mm}
	 \begin{algorithmic}[1]
	 	\State $x = 0$
	 	\State $h = 1$
	 	\For {$m' \in [m]$}
		 	\If {$y[m']=1$} \\
				\hspace{10mm}$x = x + \binom{m'}{h}$ \\
				\hspace{10mm}$h = h+1$
			\EndIf
	 	\EndFor
	 	\vspace{-3mm}
	 \end{algorithmic}
	 \begin{flushleft}
    	 \textbf{Output:}  $x\in\NN_0$
	 \end{flushleft}
	 	\vspace{-3mm}
\end{algorithm}

The perfect mapping also preserves the order between the mapped elements. This is useful in applications where it is important to preserve the ordering of elements in the domain, e.g., comparison operators.

\paragraph{Lossy Mapping.}
In some cases, we may need to map elements of some large domain to constant-weight codewords but the size of the domain is too large to assign a distinct codeword to each element. Recall that if $S$ is the domain, the code length, $m$, needs to be chosen such that $\binom{m}{k} \geq |S|$ which results in a prohibitively large $m$.

To address this issue, we propose a lossy mapping inspired by Bloom filters. The procedure for the lossy mapping is given in \Cref{alg:lossy-mapping}.

\begin{algorithm}[!ht]
	 \caption[]{\textsc{Lossy Mapping}}
	 \label{alg:lossy-mapping}
	 \begin{flushleft}
		 \textbf{Parameters:} Series of uniformly random hash functions $(H_i:S\mapsto [m])_{i\in \NN}$ \\
		 \textbf{Input:} $x \in S, m, k\in \NN$
	 \end{flushleft}
	 	\vspace{-3mm}
	 \begin{algorithmic}[1]
		\State $cnt \leftarrow 0$
		\State $i \leftarrow 1$
	 	\State $y\leftarrow 0^m$
	 	\While {$cnt < k$}
	 		\State $m'=H_i(x)$
		 	\If {$y[m'] = 0$} \\
				\hspace{10mm}$y_[m']=1$ \\
				\hspace{10mm}$cnt = cnt + 1$
			\EndIf
			\State $i = i+1$
	 	\EndWhile
	 \end{algorithmic}
	 	\vspace{-3mm}
	 \begin{flushleft}
    	 \textbf{Output:}  $y\in CW(m,k)$
	 \end{flushleft}
	 	\vspace{-3mm}
\end{algorithm}

Based on the definition, a probability exists that unequal elements of the domain are mapped to the same codeword which is formalized in the following theorem.

\begin{theorem}\label{thm:lossy-coll-prob}
In \Cref{alg:lossy-mapping}, assume $(H_i:S\mapsto [m])_{i\in \NN}$ is a series of uniformly random hash functions and $M_{m,k}(x)$ is the output of the algorithm for input $x$, $m$, and $k$ with $(H_i)_{i\in \NN}$ as the parameters. For two randomly chosen elements $x,y \in S$ such that $x\neq y$,
\begin{align}
    \mathbb{P}\left[M_{m,k}(x) = M_{m,k}(y)\right] = \frac{1}{\binom{m}{k}}.
\end{align}
\end{theorem}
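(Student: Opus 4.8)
The plan is to reduce the statement to two facts about the random object $M_{m,k}(x)$: that for any fixed input $x$ it is uniformly distributed over the $\binom{m}{k}$ subsets of $[m]$ of Hamming weight $k$ (identifying a codeword with its support), and that for two \emph{distinct} inputs $x\neq y$ the random variables $M_{m,k}(x)$ and $M_{m,k}(y)$ are independent. Granting these, the collision probability is a one-line computation, and since the resulting value does not depend on the particular pair $x\neq y$, it holds both for fixed distinct $x,y$ and for a uniformly random distinct pair.

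First I would fix the probability model induced by the parameters $(H_i)_{i\in\NN}$. Because each $H_i\colon S\to[m]$ is a uniformly random function, the sequence $H_1(x),H_2(x),\dots$ is i.i.d.\ uniform on $[m]$; likewise $H_1(y),H_2(y),\dots$ is i.i.d.\ uniform on $[m]$; and, since distinct arguments are hashed independently by each $H_i$ and distinct indices use independent functions, the two sequences $(H_i(x))_{i}$ and $(H_i(y))_{i}$ are mutually independent. The execution of \Cref{alg:lossy-mapping} on input $x$ reads only the values $H_1(x),H_2(x),\dots$, so $M_{m,k}(x)$ is a function of the first sequence alone and $M_{m,k}(y)$ of the second alone; hence $M_{m,k}(x)$ and $M_{m,k}(y)$ are independent, and (since the two sequences have the same law) identically distributed.

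The core step is the uniformity lemma: if $Z_1,Z_2,\dots$ is i.i.d.\ uniform on $[m]$, then the ordered tuple $(a_1,\dots,a_k)$ of the first $k$ \emph{distinct} symbols to appear is uniform over all ordered $k$-tuples of distinct elements of $[m]$. I would prove this by induction on $j\le k$: conditioned on $(a_1,\dots,a_j)=(\alpha_1,\dots,\alpha_j)$, the algorithm keeps drawing fresh, independent $Z$'s until one lands outside $\{\alpha_1,\dots,\alpha_j\}$, and conditioned on a draw being such a ``fresh'' symbol it is uniform over the $m-j$ remaining symbols, independently of the past. This yields $\mathbb{P}[(a_1,\dots,a_k)=(\alpha_1,\dots,\alpha_k)]=\prod_{j=0}^{k-1}\tfrac1{m-j}$ for every tuple of distinct symbols, so summing over the $k!$ orderings of a fixed set $C$ gives $\mathbb{P}[\{a_1,\dots,a_k\}=C]=k!\prod_{j=0}^{k-1}\tfrac1{m-j}=\tfrac{1}{\binom{m}{k}}$. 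Applying this with $Z_i=H_i(x)$ shows $M_{m,k}(x)$ is uniform over weight-$k$ codewords; the same holds for $y$. Combining with independence,
\begin{align}
\mathbb{P}\left[M_{m,k}(x)=M_{m,k}(y)\right]
=\sum_{C}\mathbb{P}[M_{m,k}(x)=C]\,\mathbb{P}[M_{m,k}(y)=C]
=\binom{m}{k}\cdot\frac{1}{\binom{m}{k}^{2}}
=\frac{1}{\binom{m}{k}},
\end{align}
where $C$ ranges over the size-$k$ subsets of $[m]$.

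I expect the only delicate point to be the memorylessness argument inside the uniformity lemma: one must argue that the repeated draws that hit already-chosen positions carry no information about which \emph{new} position is eventually obtained, so that conditioning on ``the next draw is fresh'' leaves a uniform choice among the unchosen symbols; this is where a clean use of the i.i.d.\ structure (e.g.\ a strong-Markov / optional-stopping style observation on the hitting time of the complement) does the work. Everything else — translating the hash-function model into i.i.d.\ sequences, the independence of $M_{m,k}(x)$ and $M_{m,k}(y)$, and the final summation — is routine.
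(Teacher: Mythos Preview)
Your proposal is correct and follows essentially the same route as the paper: reduce the collision probability to the uniformity of $M_{m,k}(x)$ over weight-$k$ codewords and prove that uniformity by induction (the paper inducts directly on $k$ via $\tfrac{k}{m}\cdot\tfrac{1}{\binom{m-1}{k-1}}=\tfrac{1}{\binom{m}{k}}$, while you induct on the ordered tuple and then sum the $k!$ orderings, which is the same computation unpacked). Your treatment is in fact slightly more complete than the paper's, since you spell out the independence of $M_{m,k}(x)$ and $M_{m,k}(y)$ and the final collision sum, both of which the paper leaves implicit in its ``it suffices'' step.
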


\begin{proof}
To prove this theorem, it suffices to prove that for any given codeword in the range of $M_{m,k}(x)$ such as $c$,
$$
    \mathbb{P}\left[M_{m,k}(x) = c\right] = \frac{1}{\binom{m}{k}}.
$$
We prove this by induction over $k$. For $k=1$, it is easy to see that 
$$
    \mathbb{P}\left[M_{m,1}(x) = c\right] = \frac{1}{m}
$$ for any $c\in Range(M_{m,1}(x))$.

Let $I(c)$ denote the positions in the codeword $c$ where the bit is set to one.
For $k>1$, the probability that $H_1(x)\in I(c)$ is equal to $\frac{k}{m}$. By induction, the probability that set of the next $k-1$ distinct outputs in the series $(H_i(x))_{i\geq 2}$ is equal to $I(c) - \{H_1(x)\}$ is equal to $\frac{1}{\binom{m-1}{k-1}}$. Hence
$$
    \mathbb{P}\left[M_{m,k}(x) = c\right] = \frac{k}{m}\frac{1}{\binom{m-1}{k-1}} = \frac{1}{\binom{m}{k}}.
$$
\end{proof}

Due to the lossy nature of the mapping, an inverse mapping is not available for the lossy mapping.

\section{Correctness of \Cref{alg:oblivious-expand}}
\label{appendix-expansion}

\begin{theorem}
The output of \Cref{alg:oblivious-expand} is identical to that of \Cref{alg:oblivious-expand-sealpir}.
\end{theorem}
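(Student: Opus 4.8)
The outer structure of the two algorithms is already the same: each initializes a list $cts$ to the single input ciphertext and then runs the double loop over $a\in[c]$ and $b\in[2^{a}]$, doubling the length of $cts$ in each $a$-round; the only differences are the three assignments in the innermost loop and the fact that \Cref{alg:oblivious-expand} omits the final $2^{-c}$ normalization, having pushed it back to the client in line~4 of \Cref{alg:plaintext-query}. So the plan is a line-by-line comparison of the inner loop bodies, followed by an induction over iterations and a short accounting of the normalization.

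First I would fix an iteration $(a,b)$, set $k=N/2^{a}+1$, and let $v$ be the value of $cts[b]$ at the start of that iteration. Within a fixed $a$-round the touched index pairs $\{b,\,b+2^{a}\}$ are disjoint and the index $b+2^{a}$ is fresh, so the $b$-loop order is irrelevant and only the (common) $a$-loop order matters; in particular $cts[b]$ still holds its end-of-round-$(a-1)$ value $v$ when iteration $(a,b)$ reads it. Reading off \Cref{alg:oblivious-expand-sealpir}, the iteration writes $cts[b]\leftarrow v+\texttt{Sub}_{k}(v)$ and $cts[b+2^{a}]\leftarrow x^{-2^{a}}v+\texttt{Sub}_{k}(x^{-2^{a}}v)$. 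Reading off \Cref{alg:oblivious-expand}, and noting that $cts[b]$ is not overwritten until after $c_{1}=x^{-2^{a}}\texttt{Sub}_{k}(v)$ and the preliminary value $x^{-2^{a}}v$ of $cts[b+2^{a}]$ have been formed, the iteration writes $cts[b]\leftarrow v+\texttt{Sub}_{k}(v)$ and $cts[b+2^{a}]\leftarrow x^{-2^{a}}v-x^{-2^{a}}\texttt{Sub}_{k}(v)$.

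The one substantive step is reconciling the two formulas for $cts[b+2^{a}]$, i.e.\ showing $\texttt{Sub}_{k}(x^{-2^{a}}v)=-\,x^{-2^{a}}\texttt{Sub}_{k}(v)$. Since substitution by $k$ acts on plaintexts as the ring map $p(x)\mapsto p(x^{k})$, it satisfies $\texttt{Sub}_{k}(\alpha(x)\cdot c)=\alpha(x^{k})\cdot\texttt{Sub}_{k}(c)$ for any plaintext $\alpha$; taking $\alpha(x)=x^{-2^{a}}$ gives $\texttt{Sub}_{k}(x^{-2^{a}}v)=x^{-2^{a}k}\texttt{Sub}_{k}(v)$. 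Because $2^{a}k=N+2^{a}$ and $x^{N}=-1$ in $R_{q}=\ZZ_{q}[x]/(x^{N}+1)$, we get $x^{-2^{a}k}=x^{-N}x^{-2^{a}}=-x^{-2^{a}}$, so the SealPIR line collapses to $x^{-2^{a}}(v-\texttt{Sub}_{k}(v))$ — exactly the value produced by \Cref{alg:oblivious-expand}, whose explicit minus sign on its last line is precisely what absorbs this sign flip. I expect this ring identity, together with care about the in-place update order, to be the only place where anything can go wrong.

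Finally I would close by induction over the sequence of iterations in the common order: if $cts$ agrees in the two algorithms before an iteration, the computation above shows it agrees after, so the two lists coincide after the full double loop. Running the outer $j$-loop of \Cref{alg:oblivious-expand} then merely applies this identical block to each of the $\lceil m/2^{c}\rceil$ input ciphertexts and concatenates the results (the trailing entries beyond position $m$ expand from the zero coefficients $E_{q}[i]$, $i\ge m$, and are discarded). For the normalization, since monomial multiplication, addition and $\texttt{Sub}_{k}$ are all linear, prescaling every query coefficient by $2^{-c}$ in \Cref{alg:plaintext-query} and postscaling every output ciphertext by $2^{-c}\bmod t$ as in the last loop of \Cref{alg:oblivious-expand-sealpir} yield the same ciphertexts. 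Hence \Cref{alg:oblivious-expand} applied to the client's query produces exactly the vector that \Cref{alg:oblivious-expand-sealpir} would, and the claim follows from Angel et al.'s correctness proof for the latter.
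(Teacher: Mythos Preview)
Your proposal is correct and follows essentially the same approach as the paper: reduce to the inner-loop body, read off the two candidate values for $cts[b+2^{a}]$, and reconcile them via the ring identity $\texttt{Sub}_{k}(x^{-2^{a}}v)=-x^{-2^{a}}\texttt{Sub}_{k}(v)$ obtained from $x^{N}=-1$. You are in fact more thorough than the paper's own proof, which stops after the inner-loop equivalence; your explicit handling of the induction over iterations, the outer $j$-loop, and the relocated $2^{-c}$ normalization fills gaps the paper leaves implicit.
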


\begin{proof}
    To prove the correctness of the oblivious expansion in \Cref{alg:oblivious-expand}, we prove it is equivalent to the oblivious expansion of SealPIR, shown in \Cref{alg:oblivious-expand-sealpir}.
    Also, let $\texttt{Sub}$ denote the substitution operation 
    For this, we prove that line 4--7 of \Cref{alg:oblivious-expand-sealpir} is equivalent to line 7--12 of \Cref{alg:oblivious-expand}.

    In \Cref{alg:oblivious-expand-sealpir}, denote $cts[b]$ on line 4 by $m(x)$ for simplicity. By executing lines 4 to 7, of the protocol, we can see that the new values for $cts[b]$ and $cts[b+2^a]$ are
    \begin{align*}
        cts[b]      &\leftarrow m(x) + \texttt{Sub}_{N/2^a+1}(m(x)) \\
        cts[b+2^a]  &\leftarrow x^{-2^a} \cdot m(x) + \texttt{Sub}_{N/2^a+1}(x^{-2^a}\cdot m(x))
    \end{align*}
    
    Similarly for \Cref{alg:oblivious-expand} and denoting $cts[b]$ on line 7 as $m(x)$, by executing lines 7 to 12, the new values for $cts[b]$ and $cts[b+2^a]$ are
    
    \begin{align*}
        cts[b]      &\leftarrow m(x) + \texttt{Sub}_{N/2^a+1}(m(x))\\
        cts[b+2^a]  &\leftarrow x^{-2^a} \cdot m(x) - x^{-2^a} \cdot \texttt{Sub}_{N/2^a+1}(m(x))
    \end{align*}
    
    So $cts[b]$ gets the same value after both protocols. To show that $cts[b+2^a]$ also gets the same value, it suffices to show that $\texttt{Sub}_{N/2^a+1}(x^{-2^a}\cdot m(x)) = - x^{-2^a} \cdot \texttt{Sub}_{N/2^a+1}(m(x))$ which can be proven as follows:
    \begin{align*}
        \texttt{Sub}_{N/2^a+1}(x^{-2^a}\cdot m(x))=&\ {(x^{N/2^a+1})}^{-2^a}\cdot m(x^{N/2^a+1})\\
        =&\ {x^{-N - 2^a}}\cdot m(x^{N/2^a+1})\\
        =&\ {-x^{- 2^a}}\cdot m(x^{N/2^a+1})\\
        =&\ - x^{-2^a} \cdot \texttt{Sub}_{N/2^a+1}(m(x))
    \end{align*}
\end{proof}

\section{Runtimes for Parallelized Operators}
\label{sec:parallel-plain}
Runtimes for parallelized plain operators are given in \Cref{tab:eq-parallel}. The runtimes in this table all have at most a 2 times speedup compared to the non-parallel version of the corresponding operator. The speedup for the folklore operator does not differ substantially from the speedup of the constant-weight operators.

Runtimes for parallel arithmetic operators are also given in \Cref{tab:eq-parallel}. Unlike the parallel operators, there is a substantial difference in the speedup that the folklore and constant-weight operators gain from parallelization. The folklore operator gains at most a 2 times speedup whereas the folklore operators gains up to a 10 fold speedup. 

\begin{table}[!ht]
    \centering
    \caption{Runtimes for plain and arithmetic equality operators in milliseconds when run in parallel. Dashes indicate cases where the ciphertext was undecryptable due to homomorphic noise. $k$ and $m$ denote the Hamming weight and constant-weight code length, respectively.}
    \label{tab:eq-parallel}
    \resizebox{\columnwidth}{!}{
        \begin{tabular}{c|cccccccc}\toprule
        \multicolumn{9}{c}{\textbf{Plain Operators}} \\
        \toprule
         & $n$ & $2^{8}$& $2^{16}$ & $2^{32}$ & $2^{64}$ & $2^{128}$ & $2^{256}$ & $2^{512}$ \\ \bottomrule \toprule
        \multirow{4}{*}{\begin{tabular}[c]{@{}c@{}}Plain\\Folklore\end{tabular}} & $\ell$ & 8& 16 & 32 & 64& 128 & 256 & 512 \\ 
         & Mult Depth & 3& 4& 5& 6 & 7 & 8 & 9 \\     
         & $N=8192$ & 0.20 & 0.25 & - & - & - & - & - \\
         & $N=16384$& 0.74 & 0.96 & 1.3 & 1.9 & 2.7 & 4.3 & 7.6 \\ \bottomrule \toprule
        \multirow{5}{*}{\begin{tabular}[c]{@{}c@{}}Plain\\Constant-\\weight\\$k=\log_2 n$\end{tabular}} & \begin{tabular}[c]{@{}c@{}}$k$\end{tabular} &  8& 16 & 32 & 64& 128 & 256 & 512\\
         & Mult Depth & 3& 4& 5& 6 & 7 & 8 & 9 \\
         & \begin{tabular}[c]{@{}c@{}}$m$\end{tabular}& 12 & 22 & 43 & 85& 168 & 334 & 665\\
         & $N=8192$ & 0.18 & 0.28 & - & - & - & - & - \\
         & $N=16384$& 0.58 & 1.0 & 1.2 & 1.9 & 2.6 & 4.1 & 6.9 \\ \bottomrule \toprule
        \multirow{5}{*}{\begin{tabular}[c]{@{}c@{}}Plain\\Constant-\\weight\\$k=\frac{1}{2}\log_2 n$\end{tabular}} & \begin{tabular}[c]{@{}c@{}}$k$\end{tabular} & 4& 8& 16 & 32& 64& 128 & 256\\
         & Mult Depth &  2& 3& 4& 5 & 6 & 7 & 8 \\
         & \begin{tabular}[c]{@{}c@{}}$m$\end{tabular}& 11 & 19 & 36 & 68& 132 & 261 & 517 \\
         & $N=8192$ & 0.15 & 0.18 & 0.24 & - & - & - & - \\
         & $N=16384$& 0.37 & 0.75 & 0.87 & 1.4 & 2.1 & 2.8 & 4.1\\ \bottomrule \toprule
        \multirow{6}{*}{\begin{tabular}[c]{@{}c@{}}Plain\\Constant-\\weight\\$k=\frac{1}{4}\log_2 n$\end{tabular}} & \begin{tabular}[c]{@{}c@{}}$k$\end{tabular} & 2& 4& 8& 16& 32& 64& 128 \\
         & Mult Depth & 1& 2& 3& 4 & 5 & 6 & 7 \\
         & \begin{tabular}[c]{@{}c@{}}$m$\end{tabular}& 24 & 37 & 64 & 117 & 221 & 427 & 838\\
         & $N=4096$ & 0.027 & - & - & - & - & - & -\\
         & $N=8192$ & 0.058 & 0.11 & 0.22 & 0.25 & - & - & -\\
         & $N=16384$& 0.18 & 0.5 & 0.76& 1.03 & 1.3 & 1.8 & 2.6 \\  \bottomrule \toprule
        \multirow{6}{*}{\begin{tabular}[c]{@{}c@{}}Plain\\Constant-\\weight\\$k=\frac{1}{8}\log_2 n$\end{tabular}} & \begin{tabular}[c]{@{}c@{}}$k$\end{tabular} & 1& 2& 4& 8 & 16& 32& 64\\
         & Mult Depth & 0& 1& 2& 3 & 4 & 5 & 6 \\
         & \begin{tabular}[c]{@{}c@{}}$m$\end{tabular}& 256& 363& 569& 968 & 1749& 3290& 6349\\
         & $N=4096$ & \textbf{0.0001} & \textbf{0.028} & - & - & - & - & -\\
         & $N=8192$ & 0.0005 & 0.067 & \textbf{0.14} & \textbf{0.22} & \textbf{0.27} & - & -\\
         & $N=16384$& 0.002 & 0.2 & 0.53 & 0.73 & 1.1 & \textbf{1.4} & \textbf{1.8}\\ \bottomrule
        \end{tabular}
    }

    \resizebox{\columnwidth}{!}{
        \begin{tabular}{c|cccccccc} \toprule
        \multicolumn{9}{c}{\textbf{Arithmetic Operators}} \\
        \toprule
        & $n$ & $2^{8}$& $2^{16}$ & $2^{32}$ & $2^{64}$ & $2^{128}$ & $2^{256}$ & $2^{512}$ \\ \bottomrule \toprule
        \multirow{4}{*}{\begin{tabular}[c]{@{}c@{}}Plain\\Folklore\end{tabular}} & $\ell$ & 8& 16 & 32 & 64& 128 & 256 & 512 \\ 
        & Mult Depth & 3& 4& 5& 6 & 7 & 8 & 9 \\     
        & $N=8192$ & 0.43 & - & - & - & - & - & -\\
        & $N=16384$ & 1.7 & 3.1 & 5.6 & 10 & 20 & 38 & 74 \\ \bottomrule \toprule
        \multirow{5}{*}{\begin{tabular}[c]{@{}c@{}}Arithmetic\\Constant-\\weight\\$k=\log_2 n$\end{tabular}} & $k$ & 8 & 16 & 32 & 64 & 128 & 256 & 512 \\
        & Mult Depth & 4 & 5 & 6 & 7 & 8 & 9 & 10 \\
        & $m$ & 12 & 22 & 43 & 85 & 168 & 334 & 665\\
        & $N=8192$  & 0.29 & - & - & - & - & - & -\\
        & $N=16384$  & 1.0 & 1.4 & 2.0 & 2.7 & 4.4 & 8.2 & 14 \\ \bottomrule \toprule
        \multirow{5}{*}{\begin{tabular}[c]{@{}c@{}}Arithmetic\\Constant-\\weight\\$k=\frac{1}{2}\log_2 n$\end{tabular}} & $k$ & 4 & 8 & 16 & 32 & 64 & 128 & 256\\
        & Mult Depth& 3 & 4 & 5 & 6 & 7 & 8 & 9 \\
        & $m$ & 11 & 19 & 36 & 68 & 132 & 261 & 517\\
        & $N=8192$ & \textbf{0.22} & \textbf{0.36} & - & - & - & - & -\\
        & $N=16384$ & 0.84 & 1.1 & 1.6 & \textbf{2.6} & \textbf{4} & \textbf{6.5} & \textbf{9.5}\\ \bottomrule \toprule
        \multirow{5}{*}{\begin{tabular}[c]{@{}c@{}}Arithmetic\\Constant-\\weight\\$k=\frac{1}{4}\log_2 n$\end{tabular}} & $k$ & 2 & 4 & 8 & 16 & 32 & 64 & 128\\
        & Mult Depth & 2 & 3 & 4 & 5 & 6 & 7 & 8\\
        & $m$ & 24 & 37 & 64 & 117 & 221 & 427 & 838 \\
        & $N=8192$ & 0.29 & 0.42 & - & - & - & - & -\\
        & $N=16384$ & 0.70 & 1.0 & \textbf{1.5} & 2.8 & 4.2 & 7.1 & 12 \\ \bottomrule \toprule
        \multirow{6}{*}{\begin{tabular}[c]{@{}c@{}}Arithmetic\\Constant-\\weight\\$k=\frac{1}{8}\log_2 n$\end{tabular}} & $k$ & 1 & 2 & 4 & 8 & 16 & 32 & 64 \\
        & Mult Depth & 1 & 2 & 3 & 4 & 5 & 6 & 7 \\
        & $m$ & 256 & 363 & 569 & 968 & 1749 & 3290 & 6349 \\
        & $N=4096$ & 0.44 & - & - & - & - & - & -\\
        & $N=8192$ & 0.81 & 1.1 & 1.6 & - & - & - & -\\
        & $N=16384$ & 3.0 & 4.5 & 7.0 & 12 & 20 & 37 & 73 \\ 
        \bottomrule \end{tabular}
    }

\end{table}

\section{Detailed Runtimes of the Unary Approach, SealPIR and MulPIR}
\label{sec:other-approach}
The unary approach occurs when $k=1$ in constant-weight PIR, or when $d=1$ in SealPIR and MulPIR. In this approach, the selection vector in its entirety is communicated over the network.
In the unary approach, no expensive homomorphic operations such as homomorphic multiplications are performed. There is also no layered encryption as done in SealPIR. Hence, the server time is smaller than other protocols shown in this work. However, since the size of the selection vector is on the order of the number of rows in the database, the upload cost rises quickly as the number of rows grows. The upload cost becomes impractical very early, hence it is not a suitable solution for databases with a large number of rows.

We also provide numbers for SealPIR and MulPIR for payload of one plaintext in \Cref{tab:packed-pir-all}.

\begin{table}[!ht]
    \centering
    \caption{Runtime of PIR protocols for a response size of one plaintext. Runtimes are in seconds and an average of 10 runs. *This parameter set did not produce a decryptable result}
    \label{tab:packed-pir-all}
    \resizebox{\columnwidth}{!}{%
    \begin{tabular}{@{\extracolsep{4pt}}ccccccc@{}}
    \toprule
    &&& \multicolumn{4}{c}{\textbf{Time (ms)}}\\ \cline{4-7}
    \begin{tabular}[c]{@{}c@{}}\textbf{\# of}\\\textbf{Rows}\end{tabular} &
    \begin{tabular}[c]{@{}c@{}}\textbf{DB Size}\\\textbf{(MB)}\end{tabular} &
    \begin{tabular}[c]{@{}c@{}}\textbf{Code}\\\textbf{Length}\end{tabular} & \begin{tabular}[c]{@{}c@{}}\textbf{Expansion}\end{tabular} & \begin{tabular}[c]{@{}c@{}}\textbf{Sel. Vec.}\\\textbf{Calculation}\end{tabular} & \begin{tabular}[c]{@{}c@{}}\textbf{Inner}\\\textbf{Product}\end{tabular} & \begin{tabular}[c]{@{}c@{}}\textbf{Total Server}\end{tabular}\\
    \bottomrule
    \toprule

    \multicolumn{7}{c}{Folklore, $N=8192$ (Query = 216 KB, Response = 106 KB)} \\
    \midrule
    256 &	8 &	5 &	0.06 &	58 &	0.9 & 60 \\
    512$^{*}$ & 9 &	10 & 0.1 & 130 & 1.7 & 130 \\
    \bottomrule
    \toprule
    \multicolumn{7}{c}{Folklore, $N=16384$ (Query = 913 KB, Response = 224 KB)} \\
    \midrule
    512 &	21 &	9 &	0.8 &	650 &	7.4 &	660 \\
    1024 &	42 &	10 &	0.8 &	1500 &	14 &	1500 \\
    2048 &	84 &	11 &	0.8 &	3300 &	29 &	3300 \\
    4096 &	170 &	12 &	0.8 &	7200 &	56 &	7200 \\
    8192 &	340 &	13 &	0.8 &	16000 &	120 &	16000 \\
    16384 &	670 &	14 &	0.8 &	35000 &	250 &	35000 \\
    \bottomrule
    \toprule
    \multicolumn{7}{c}{Unary, $N=4096$ (Response = 46 KB)} \\
    \midrule
    256 & 2.6 &	256 &	0.5 &	0.009 &	0.2 &	0.8 \\
    512 & 5.2 &	512 &	1 &	0.02 &	0.4 &	1.4 \\
    1024 & 10 &	1024 &	1.9 &	0.05 &	0.8 &	2.8 \\
    2048 & 21 &	2048 &	3.8 &	0.2 &	1.7 &	5.7 \\
    4096 & 42 &	4096 &	7.7 &	0.5 &	3.3 &	11 \\
    8192 & 84 &	8192 &	15 &	1.9 &	6.4 &	24 \\
    16384 & 170 &	16384 &	30 &	6.7 &	13 &	49 \\
    32768 & 340 &	32768 &	59 &	23 &	25 &	110 \\
    65536 & 670 &	65536 &	120 &	87 &	52 &	260 \\
    131072 & 1300 &	131072 &	240 &	340 &	110 &	680 \\
    \bottomrule
    \toprule
    \multicolumn{7}{c}{Constant-weight $k=2, N=8192$, (Query = 216 KB, Response = 106 KB)} \\
    \midrule
    256 & 5.2 &	24 &	0.3 &	8.3 &	0.9 &	9.7 \\
    512 & 10 &	33 &	0.5 &	17 &	1.7 &	19 \\
    1024 & 21 &	46 &	0.5 &	33 &	3.5 &	38 \\
    2048 & 42 &	65 &	1 &	67 &	6.9 &	75 \\
    4096 & 84 &	92 &	1 &	130 &	13 &	150 \\
    8192 & 170 &	129 &	2 &	270 &	27 &	300 \\
    16384 & 340 &	182 &	2 &	540 &	55 &	600 \\
    32768 & 670 &	257 &	5 &	1100 &	110 &	1200 \\
    65536 & 1300 &	363 &	5 &	2300 &	230 &	2500 \\
    
    \bottomrule
    \toprule
    \multicolumn{7}{c}{SEALPIR $d=2, N=4096$ (Query = 61.4 KB, Response = 307 KB)} \\
    \midrule
    512 & 4.98 & 46 & - & - & - & 0.34 \\
    1024 & 9.96 & 64 & - & - & - & 0.46 \\
    2048 & 19.9 & 92 & - & - & - & 0.80 \\
    4096 & 39.8 & 128 & - & - & - & 1.2 \\
    8192 & 79.6 & 182 & - & - & - & 2.2 \\
    16384 & 159 & 256 & - & - & - & 3.7 \\
    32768 & 318 & 364 & - & - & - & 7.0 \\
    65536 & 637 & 512 & - & - & - & 12 \\
    131072 & 1275 & 726 & - & - & - & 24 \\
    262144 & 2550 & 1024 & - & - & - & 50 \\
    524288 & 5100 & 1450 & - & - & - & 100 \\
    1048576 & 10200 & 2048 & - & - & - & 200 \\
    2097152 & 20401 & 2898 & - & - & - & 430 \\
    \bottomrule
    \toprule
    \multicolumn{7}{c}{MulPIR $d=2, N=8192$ (Query = 122 KB, Response = 119 KB)} \\
    \midrule
    256 & 4.98 & 32 & - & - & - & 2.3 \\
    512 & 9.96 & 46 & - & - & - & 4.1 \\
    1024 & 19.9 & 64 & - & - & - & 6.8 \\
    2048 & 39.8 & 92 & - & - & - & 12 \\
    4096 & 79.6 & 128 & - & - & - & 22 \\
    8192 & 159 & 182 & - & - & - & 44 \\
    16384 & 318 & 256 & - & - & - & 83 \\
    32768 & 637 & 364 & - & - & - & 160 \\
    65536 & 1275 & 512 & - & - & - & 320 \\
    131072 & 2550 & 726 & - & - & - & 630 \\
    262144 & 5100 & 1024 & - & - & - & 1200 \\
    524288 & 10200 & 1450 & - & - & - & 2500 \\
    \bottomrule
    \end{tabular}%
    }
    \end{table}

\end{document}